\newtheorem{definition}{Definition}[section]
\newtheorem{theorem}{Theorem}[section]
\newtheorem{lemma}{Lemma}[section]
\title{Simple Proof of Security of the multiparty Prepare and Measure QKD}
\author[1]{Kumar Nilesh}
\affil[1]{Department of Mathematics and Statistics, Indian Institute of Science Education and Research, Kolkata}
\date{}
\begin{document}

\maketitle

\begin{abstract}
The majority of research to date has concentrated on the quantum key distribution (QKD) between two parties. In general, the QKD protocols proposed for the multiparty scenario often involve the usage of a maximally entangled state, such as GHZ, which is challenging to implement in practice. This paper examines the prepare and measure version of multiparty communication through quantum key distribution. It is sufficient to have the capability of preparing and measuring a single qubit state. The security of the multiparty prepare and measure QKD is demonstrated by utilizing the well-known techniques of quantum error correction and entanglement purification. We begin by establishing the security of the entangled-based version and progress through a series of reductions to arrive at our equivalent multiparty prepare and measure version of QKD. We establish the security proof of multiparty entangled-based version by combining FTQC and classical statistics on the GHZ basis. Then proved the conditional security of the multiparty prepare-and-measure QKD by the equivalency.

\end{abstract}

\section{Introduction}
One of the most practical applications of quantum information science is quantum key distribution (QKD). Significant advances have been made both theoretically and experimentally since the initial QKD protocols were proposed \cite{Advances}. A QKD protocol essentially provides a systematic mechanism for two honest parties, commonly referred to as Alice and Bob, to produce a shared secret key, connected via and an authenticated classical channel and an insecure quantum channel. Because of the protocol's simplicity, it has been widely implemented and even monetized in various forms \cite{experiment}. Most QKD protocols do their security proof in a simplified scenario, in which the parties exchange an infinite amount of quantum bits (asymptotic scenario). Of course, this is far from reality, but it substantially simplifies the proof and provides insight into the protocol's real-world performance.

The majority of quantum communication theoretical and experimental work has so far concentrated on two-party protocols. Prior research has shown that Alice can concurrently communicate a key after two separate secret keys are shared through a typical two-party QKD protocol, between Alice and Bob and Alice and Charlie \cite{sudhir}. Our proposed protocol is different because it lets multiple people share a secret key at the same time. When Alice broadcasts a confidential message to two receivers, Bob and Charlie, the three persons involved must have a shared private key. This is referred to as the multiparty quantum key distribution problem. Multiparty quantum key distribution protocols seek to provide information theoretic security for multi-party communication based on quantum mechanics principles, which surpass their classical counterparts physically. The analysis of multipartite QKD protocols, on the other hand, has only been done in the implausible scenario of an endless number of signals being exchanged over the quantum channel. Nonetheless, multiparty quantum communication protocols has been demonstrated by the fascinating examples of quantum secret sharing (QSS) \cite{QSS}, quantum cryptographic conferencing (QCC) \cite{QCC}, third-man quantum cryptography \cite{TQC}, and others. Recent research has focused on the generalization of two-party protocols to multiparty schemes \cite{Xiao}. It has been demonstrated that there are certain quantum network schemes \cite{Epping} and noise regimes \cite{Ribeiro} where executing a multipartite technique is favorable in order to establish a multipartite secret key. These multiparty protocols require a critical tool: the Greenberger-Horne-Zeilinger (GHZ) entangled states that has multiparty quantum correlations that were initially developed to demonstrate the extreme violation of local realism by quantum physics \cite{GHZ, GHZ2}.

Eve is expected to do whatever manipulations are permitted by quantum mechanics on quantum systems communicated via quantum channels. This assumption is similar to Ref. \cite{assumptions2} and is a multiparty refinement of Ref. \cite{assumptions1}. A possible eavesdropper is completely unrestricted in her actions on the insecure quantum channel, as she is free to launch any type of attack called coherent attack on the quantum communication. We also assume that all parties have a genuine randomness source and that the equipment used to measure quantum systems operate in accordance with their ideal functioning.

Our approach is a generalization of the simple proof of security presented by Shor and Preskill \cite{BB84, shorpresskill}. Mayers first proved the security of BB84 \cite{mayers}. Unfortunately, this proof was rather tricky, and only a few people understood it. More work was required to genuinely comprehend, rather than simply assume, that BB84 was secure. Shor and Preskill provided a simple proof of security in 2000, in which they used the security proof of the modified Lo-Chau QKD and then reduced it to BB84 via a sequence of reductions \cite{lochau}. The initial security proof of the Modified Lo-Chau QKD was very straightforward. But, it requires non-trivial results that prove an intuitive idea that low von Neumann entropy follows from high fidelity. Further, the proof also requires the Holevo bound. The Shor Preskill proof was particularly beautiful because it combined difficult conclusions from Ref. \cite{BDSW} and Ref. \cite{lochau} to demonstrate that BB84 was secure in a simple and clear manner. We will employ a similar reduction technique to arrive at a multiparty prepare and measure QKD algorithm. To fully comprehend all of the reductions, we require an intermediate QKD known as the multiparty CSS QKD.

\subsection{BB84}
Bennett and Brassard proposed the first protocol on Quantum Key Distribution (QKD) in 1984, famously known as BB84 \cite{BB84}. Here the classical information is encoded in qubits which is a two-level quantum system.

The protocol follows the following basic structure: Alice wants to send a secret key to Bob. For this process, they have access to the classical and quantum channel. As the name suggests, the classical channel sends classical bits of information. It is further assumed that this channel should be authenticated, but any eavesdropper can listen to all the conversation. On the other hand, they use a quantum channel to send quantum signals. This channel is assumed to be completely insecure. Any eavesdropper can perform all kinds of operations on quantum signals, provided it is allowed by quantum mechanics.

This is a two-part procedure. The first part of the protocol involves quantum communication which contains the preparation, transmission, and measurement of quantum signals. The second part is entirely classical. It involves classical post-processing, in which, using classical communication, both parties use the string obtained from the first part to generate a secret key.

For over a decade, BB84 was assumed to be was secure due to the No-Cloning Theorem \cite{noclonning}. Since Eve cannot perfectly copy the qubits sent to Bob, there is little chance she shares the same result as Bob. Furthermore, measurement implies disturbance, so if Eve is too aggressive, Bob's test bits will differ from Alice's test bits, thus forcing Alice and Bob to abort. One possibility is that Eve could simply block or manipulate transmitted messages forcing Alice and Bob to abort. This is commonly known as a denial of service attack DoS \cite{dos}. A DoS attack is not a complete attack because it prevents Alice from communicating with Bob; consequently, Eve has nothing to listen to.

The original BB84 was also impractical. If any errors occurred, whether from an adversary or natural noise, the protocol would abort. Due to the prevalence of noise in nature, the original BB84 will almost always fail in practice. This led to a push to alter BB84 to make it more practical by adding classical post-processing, such as additional classical computation and/or classical communication. Adding classical post-processing to BB84 has the benefit of increasing its security without increasing its difficulty to implement. However, such additions made BB84 more complex and made it seemingly more difficult to prove its security.

Bennett and Brassard gave a very intuitive argument for the security of BB84. However, in order to prove it to be information-theoretically secure, we should relax any restriction on Eve, provided Eve is bounded by quantum mechanics. In other words, she is allowed to perform the most general attack called Coherent attack.

The first proof of security for a QKD was proposed by Lo and Chau in 1999 \cite{lochau}. Their protocol was not prepared and measure, and they utilized entanglement distillation protocol (EDP) \cite{EDP}. This proof established a link between quantum error-correcting codes and quantum key distributions. However, both parties will require large-scale quantum computers to accomplish this. In 2000, Shor and Preskill found a "simple proof of security for BB84" based on Mayers' unique use of error-correcting codes and the Lo-Chau approach to the entanglement-based version \cite{shorpresskill, mayers, lochau}. This paper proves the security of a practical multiparty QKD using the Shor-Preskill style and provides all necessary background material to understand the security proof.

In this paper, we prove the conditional security of a practical version of multiparty prepare and measure that even with the presence of acceptable noise can successfully generate secret keys. We do so by first describing and proving the security of the Entangled-Based Quantum Key Distribution (QKD) protocol. Following that, we use one of the reduction techniques devised by Shor and Preskill to reduce the Entangled-based QKD to a practical multiparty prepare and measure QKD.

\subsection{CSS Code}

Before we define quantum error-correcting codes, specifically CSS codes, we are going to describe the classical error-correcting codes. We will discuss error-correcting codes in-depth in this section and then use them for entanglement purification in the following section.

\begin{definition}
For a binary $n$ dimensional vector space $\mathbb{Z}_{2}^{n}$, the $k$ dimensional binary linear code $\mathscr{C}$ is defined as $[n,k,d]$. Where $n$ denotes the code length, $k$ is the dimension of the code and $d$ denotes the minimum distance of all codewords in $\mathscr{C}$. The rate of this linear code is given by $\frac{k}{n}$.
\end{definition}

The message space of an $[n, k, d]$ binary linear code is $\mathbb{Z}_{2}^{k}$. Note that the code can be any $k$-dimensional subspace of $\mathbb{Z}_{2}^{n}$, so there exist many $[n, k, d]$ binary linear codes for fixed parameters, $n, k$, and $d$.

\begin{definition}
The generator matrix denoted as $\mathscr{G}$ for an $[n, k, d]$ linear code $\mathscr{C}$ is defined as $k \times n$ matrix whose row space is equal to $\mathscr{C}$.
\end{definition}

It's worth noting that a code can be defined solely by its generator matrix. If $\mathscr{G}$ is a matrix with binary entries, its row space is the code generated by $\mathscr{G}$.

Let $\mathscr{G}$ be the generator matrix of $[n, k, d]$ linear code $\mathscr{C}$. To encode a message $m \in \mathbb{Z}_{2}^{k}$ into a codeword $c \in \mathscr{C}$, one simply performs the following operation:
$$
m \mathscr{G}=c .
$$

\begin{theorem}[Hamming Bound]
\cite{hamming} With the nearest neighbour decoding any $[n, k, d]$ linear code $\mathscr{C}$ can correct upto $t=\left\lfloor\frac{d-1}{2}\right\rfloor$ bit flip errors.
\end{theorem}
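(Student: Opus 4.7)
The plan is to combine a Hamming-sphere packing argument with the triangle inequality for the Hamming metric. First I would recall that under nearest-neighbour decoding, a received word $r$ is decoded to the codeword $c \in \mathscr{C}$ minimizing $d_H(r, c)$; decoding succeeds whenever the transmitted codeword is the unique closest codeword to $r$. Hence the theorem reduces to showing that if at most $t = \lfloor (d-1)/2 \rfloor$ bit flips occur during transmission, the transmitted codeword is strictly closer to $r$ than every other codeword.

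The key step is to show that the closed Hamming balls of radius $t$ around distinct codewords are pairwise disjoint. Suppose $c_1, c_2 \in \mathscr{C}$ are distinct and that some word $r$ satisfies $d_H(r, c_i) \leq t$ for both $i = 1$ and $i = 2$. Then the triangle inequality for the Hamming metric yields
$$ d \leq d_H(c_1, c_2) \leq d_H(c_1, r) + d_H(r, c_2) \leq 2t. $$
With $t = \lfloor (d-1)/2 \rfloor$ we have $2t \leq d - 1$, contradicting $d_H(c_1, c_2) \geq d$ (which comes from the minimum-distance hypothesis on $\mathscr{C}$). Hence no such $r$ can exist and the balls are pairwise disjoint.

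Finally, if Alice transmits $c \in \mathscr{C}$ and the channel introduces at most $t$ bit flips, then $d_H(c, r) \leq t$, so $r$ lies in the ball of radius $t$ around $c$; by the disjointness established above, $c$ is the unique codeword within distance $t$ of $r$, and in particular the unique codeword minimizing $d_H(\cdot, r)$. Thus nearest-neighbour decoding returns $c$, proving that up to $t$ bit flip errors are correctable. I do not expect any serious obstacle here: the entire content is the triangle inequality for the Hamming metric together with the definition of minimum distance, and the floor simply absorbs the parity distinction between even and odd $d$.
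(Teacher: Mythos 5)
Your proof is correct: the triangle-inequality/sphere-disjointness argument is the standard and complete justification that nearest-neighbour decoding corrects up to $t=\lfloor(d-1)/2\rfloor$ errors. The paper gives no proof of this theorem (it is stated with only a citation), so there is nothing to diverge from; your argument is exactly the textbook one the cited reference supplies.
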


Another very essential matrix is the parity check matrix. This matrix is used to identify errors in codewords and thus can help in the decoding procedure.

\begin{definition}
The parity check matrix denoted as $\mathscr{H}$ for an $[n, k, d]$ linear code $\mathscr{C}$ is defined as $(n-k) \times n$ matrix with the property that
$$
\mathscr{H} c^{T}=0 \Longleftrightarrow c \in \mathscr{C}
$$
\end{definition}

Since $\mathscr{C}=\operatorname{ker} \mathscr{H}=\left\{v \in \mathbb{Z}_{2}^{n} \mid v \mathscr{H}=0\right\}$, like the generator matrix, the parity check matrix alone also defines the code $\mathscr{C}$. We use the parity check matrix to speed up decoding by calculating the syndrome, $s=\mathscr{H} b^{T}$.

\begin{theorem}
\cite{mceliece} Let a $t$ error correcting $[n, k, d]$ linear code $\mathscr{C}$ whose parity check matrix is $\mathscr{H}$. For every $e \in E=\left\{e \in \mathbb{Z}_{2}^{n} \mid \operatorname{wt}(e) \leq t\right\}$ where $t=\left\lfloor\frac{d-1}{2}\right\rfloor$, there is a unique syndrome s such that
$$
s=\mathscr{H} e^{T}
$$
\end{theorem}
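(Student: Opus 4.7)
The plan is to prove uniqueness by a contrapositive/contradiction argument, showing that two distinct low-weight errors cannot share a syndrome without contradicting the minimum distance $d$ of the code.

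First I would suppose, toward contradiction, that there exist two distinct error vectors $e_{1}, e_{2} \in E$ with $e_{1} \neq e_{2}$ but $\mathscr{H} e_{1}^{T} = \mathscr{H} e_{2}^{T}$. Working over $\mathbb{Z}_{2}$, linearity of $\mathscr{H}$ gives $\mathscr{H}(e_{1}+e_{2})^{T}=0$, so by the defining property of the parity check matrix (namely $\mathscr{H} c^{T}=0 \Longleftrightarrow c \in \mathscr{C}$), the vector $e_{1}+e_{2}$ lies in the code $\mathscr{C}$. Note that $e_{1}+e_{2}\neq 0$ because $e_{1}\neq e_{2}$.

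Next I would bound the weight of this codeword using the triangle inequality for Hamming weight:
$$
\operatorname{wt}(e_{1}+e_{2}) \leq \operatorname{wt}(e_{1})+\operatorname{wt}(e_{2}) \leq 2t = 2\left\lfloor\tfrac{d-1}{2}\right\rfloor \leq d-1.
$$
Thus $e_{1}+e_{2}$ is a nonzero element of $\mathscr{C}$ of Hamming weight strictly less than $d$. This directly contradicts the defining property of the minimum distance $d$, which requires every nonzero codeword to have weight at least $d$. Hence no two distinct elements of $E$ can have the same syndrome, and the map $e \mapsto \mathscr{H} e^{T}$ restricted to $E$ is injective, giving the claimed uniqueness.

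I do not anticipate a serious obstacle here; the only subtlety is the inequality $2\lfloor (d-1)/2 \rfloor \leq d-1$, which holds both when $d$ is odd (with equality) and when $d$ is even (strictly), so the argument goes through uniformly and does not require a parity case split. The proof uses only two ingredients already established in the excerpt: the kernel characterization of $\mathscr{C}$ via $\mathscr{H}$ and the definition of minimum distance implicit in the $[n,k,d]$ parameters, together with $t=\lfloor(d-1)/2\rfloor$ from the Hamming bound.
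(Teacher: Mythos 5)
Your proof is correct. The paper itself gives no proof of this statement---it is quoted from McEliece with a citation only---so there is nothing to compare against; your argument is the standard one: two distinct errors in $E$ with equal syndromes would force $e_{1}+e_{2}$ to be a nonzero codeword of weight at most $2\lfloor (d-1)/2\rfloor \leq d-1 < d$, contradicting the minimum distance. The only implicit step worth flagging is that you equate the minimum \emph{distance} of the code with the minimum \emph{weight} of a nonzero codeword; this is valid precisely because $\mathscr{C}$ is linear (so $c_{1}+c_{2}\in\mathscr{C}$ and $d(c_{1},c_{2})=\operatorname{wt}(c_{1}+c_{2})$), and a one-clause remark to that effect would make the argument fully self-contained.
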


To understand the CSS quantum error-correcting code, we need to define the dual code of a binary linear code. The dual code will later be used for phase error correction.

\begin{definition}
The dual code of an $[n, k, d]$ linear code $\mathscr{C}$ is defined as the vector space orthogonal to $\mathscr{C}$. Namely,
$$
\mathscr{C}^{\perp}=\left\{v \in \mathbb{Z}_{2}^{n} \mid c \cdot v=0, \forall c \in \mathscr{C}\right\}
$$
\end{definition}

The following theorem describes the properties of a dual code.

\begin{theorem}
\cite{mceliece} For an $[n, k, d]$ linear code $\mathscr{C}$ whose generator and parity check matrices are  $\mathscr{G}$ and  $\mathscr{H}$ respectively. The dual code $\mathscr{C}^{\perp}$ is a linear code $\left[n, n-k, d^{\prime}\right]$, whose generator and parity check matrices are $\mathscr{G}^{\perp}=\mathscr{H}$ and $\mathscr{H}^{\perp}=\mathscr{G}$ respectively, for appropriate $d^{\prime}$.
\end{theorem}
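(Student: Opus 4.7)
The plan is to reduce everything to a direct linear algebra computation using the definitions: the dual is just the orthogonal complement, so the key is to show its dimension is $n-k$ and that the roles of $\mathscr{G}$ and $\mathscr{H}$ swap when passing to the dual. The value of $d'$ is left unspecified in the statement, so I do not need to pin it down beyond remarking that $d'$ is the minimum weight of a nonzero vector in $\mathscr{C}^{\perp}$.

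First I would verify that $\mathscr{C}^{\perp}$ is a linear subspace, which follows at once from the bilinearity of the inner product on $\mathbb{Z}_2^n$. To compute its dimension, I would translate the orthogonality condition into a matrix equation: since every $c \in \mathscr{C}$ has the form $c = a\mathscr{G}$ for some $a \in \mathbb{Z}_2^k$, the condition $c\cdot v = 0$ for all $c \in \mathscr{C}$ is equivalent to $\mathscr{G} v^{T} = 0$. Hence
\[
\mathscr{C}^{\perp} \;=\; \ker \mathscr{G},
\]
and because $\mathscr{G}$ has rank $k$ the rank--nullity theorem immediately gives $\dim \mathscr{C}^{\perp} = n - k$. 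This already proves $\mathscr{C}^{\perp}$ is an $[n, n-k, d']$ linear code for some $d'$.

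Next I would identify $\mathscr{H}$ as a generator matrix of $\mathscr{C}^{\perp}$. For any row $h_i$ of $\mathscr{H}$ and any codeword $c \in \mathscr{C}$, the $i$th coordinate of $\mathscr{H} c^{T}$ is exactly $h_i \cdot c$, which vanishes by the defining property of $\mathscr{H}$. Thus every row of $\mathscr{H}$ lies in $\mathscr{C}^{\perp}$, so the row space of $\mathscr{H}$ is contained in $\mathscr{C}^{\perp}$. Since $\mathscr{H}$ is $(n-k)\times n$ and has full row rank (otherwise its kernel would strictly exceed $\dim \mathscr{C} = k$, contradicting $\mathscr{C} = \ker \mathscr{H}$), its row space has dimension $n - k$. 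A dimension count then forces equality: $\operatorname{rowspace}(\mathscr{H}) = \mathscr{C}^{\perp}$, so $\mathscr{G}^{\perp} = \mathscr{H}$.

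Finally, the parity-check claim $\mathscr{H}^{\perp} = \mathscr{G}$ falls out of the earlier identity $\mathscr{C}^{\perp} = \ker \mathscr{G}$: the matrix $\mathscr{G}$ satisfies $\mathscr{G} v^{T} = 0 \iff v \in \mathscr{C}^{\perp}$, which is precisely the definition of a parity check matrix for $\mathscr{C}^{\perp}$. There is no real obstacle here; the only subtlety worth flagging is the implicit full-rank assumption on $\mathscr{H}$, which I would justify by noting that any redundant rows can be discarded without changing $\ker \mathscr{H}$, so without loss of generality the parity check matrix has exactly $n-k$ independent rows.
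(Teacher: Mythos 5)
Your proof is correct. Note that the paper does not actually prove this statement; it is quoted as a standard result with a citation to McEliece's textbook, so there is no in-paper argument to compare against. Your argument is the standard one: identifying $\mathscr{C}^{\perp}$ with $\ker\mathscr{G}$ (in the sense $\{v : \mathscr{G}v^{T}=0\}$), applying rank--nullity to get dimension $n-k$, and then using the dimension count plus the containment of the rows of $\mathscr{H}$ in $\mathscr{C}^{\perp}$ to conclude $\mathscr{G}^{\perp}=\mathscr{H}$, with the parity-check claim $\mathscr{H}^{\perp}=\mathscr{G}$ falling out of the same kernel identity. You correctly flag and discharge the one genuine subtlety, the full row rank of $\mathscr{H}$, and you rightly do not attempt to pin down $d'$, which the statement leaves unspecified. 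One cosmetic remark: over $\mathbb{Z}_2$ the bilinear form is degenerate on subspaces ($\mathscr{C}$ may intersect $\mathscr{C}^{\perp}$ nontrivially), but nothing in your argument relies on non-degeneracy, so this does not affect the proof.
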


The quantum analog of classical noise is simply called quantum noise. Beyond classical error correction, quantum error-correcting codes can also purify damaged entangled qubits and are the foundation of proof of security of the prepare and measure QKD such as BB84.

\begin{definition}
An $[[n, k]]$ quantum error correcting code $\mathscr{Q}$ is a $k$-dimensional subspace of $\mathscr{H}^{\otimes n}$ with length $n$ and rate $\frac{k}{n}$.
\end{definition}

In this section, we will create a code that will correct against a $t$-out-of- $n$ $Z$-channel followed by a $t$-out-of- $n$ $X$-channel. The quantum version of error-correcting code that we will use is called CSS code discovered by Calderbank and Shor and independently by Steane and lies at the heart of the proof of security \cite{CSS, CSS2}.

\begin{definition}
Let $\mathscr{C}_{1}$ be a $t_{1}$-error correcting $\left[n, k_{1}, d_{1}\right]$ binary linear code, $\mathscr{C}_{2}$ an $\left[n, k_{2}, d_{2}\right]$ binary linear code, and $\mathscr{C}_{2}^{\perp}$ a $t_{2}$-error correcting binary linear code, where
$$
\mathscr{C}_{2} \subset \mathscr{C}_{1}
$$
A $t$-error correcting $[[n, k]]$ CSS code is a subspace $\mathscr{Q}$ of $\mathscr{H}^{\otimes n}$ derived from $\mathscr{C}_{1}$ and $\mathscr{C}_{2}$ with $t=\min \left(t_{1}, t_{2}\right)$ and $k=k_{1}-k_{2}$. $\mathscr{Q}$ is spanned by the CSS codewords
$$
\frac{1}{\sqrt{\left|\mathscr{C}_{2}\right|}} \sum_{c \in \mathscr{C}_{2}}|d \oplus c\rangle
$$
where $d$ runs through $\mathscr{C}_{1}$.
\end{definition}

\begin{lemma}[CSS codewords correspond to cosets] \label{lemma1}
\cite{mceliece} Let $d, d^{\prime} \in \mathscr{C}_{1}$. Then $d$ and $d^{\prime}$ belong to the same coset in $\mathscr{C}_{1} / \mathscr{C}_{2}$, i.e. $d \oplus \mathscr{C}_{2}=d^{\prime} \oplus \mathscr{C}_{2}$, if and only if
$$
\frac{1}{\sqrt{\left|\mathscr{C}_{2}\right|}} \sum_{c \in \mathscr{C}_{2}}|d \oplus c\rangle=\frac{1}{\sqrt{\left|\mathscr{C}_{2}\right|}} \sum_{c \in \mathscr{C}_{2}}\left|d^{\prime} \oplus c\right\rangle
$$
\end{lemma}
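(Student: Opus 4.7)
The plan is to prove both directions directly by manipulating the sums, exploiting the fact that $\mathscr{C}_2$ is an abelian group under bitwise addition $\oplus$ and that the computational basis states are orthonormal.

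For the forward direction, I would start from the hypothesis $d \oplus \mathscr{C}_2 = d' \oplus \mathscr{C}_2$, which means there exists some fixed $c_0 \in \mathscr{C}_2$ with $d' = d \oplus c_0$. Substituting into the right-hand sum gives $\sum_{c \in \mathscr{C}_2} |d' \oplus c\rangle = \sum_{c \in \mathscr{C}_2} |d \oplus c_0 \oplus c\rangle$. Since $\mathscr{C}_2$ is a linear code, the translation map $c \mapsto c_0 \oplus c$ is a bijection from $\mathscr{C}_2$ to itself, so reindexing by $c' = c_0 \oplus c$ converts the sum into $\sum_{c' \in \mathscr{C}_2} |d \oplus c'\rangle$. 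Dividing by the common normalization $\sqrt{|\mathscr{C}_2|}$ gives the required equality.

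For the reverse direction, I would argue by comparing the supports of the two equal superpositions. Because the map $c \mapsto d \oplus c$ is injective on $\mathbb{Z}_2^n$ (and likewise for $d'$), each sum is a superposition of $|\mathscr{C}_2|$ distinct computational basis states, each with coefficient $1/\sqrt{|\mathscr{C}_2|}$. Orthonormality of the computational basis then forces the two underlying sets $\{d \oplus c : c \in \mathscr{C}_2\}$ and $\{d' \oplus c : c \in \mathscr{C}_2\}$ to coincide, which is precisely the statement $d \oplus \mathscr{C}_2 = d' \oplus \mathscr{C}_2$.

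I do not expect any serious obstacle: the lemma is essentially a restatement of the group-theoretic fact that cosets of a subgroup either coincide or are disjoint, transported through the encoding $d \mapsto \frac{1}{\sqrt{|\mathscr{C}_2|}} \sum_{c} |d \oplus c\rangle$. The only mildly delicate point is the reverse direction, where one must be careful to note that the coefficients in the superposition are all equal and nonzero, so reading off the support recovers the coset exactly; this is immediate once the injectivity of $c \mapsto d \oplus c$ is observed.
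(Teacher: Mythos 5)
Your proof is correct and complete: the forward direction via the translation bijection $c \mapsto c_0 \oplus c$ on $\mathscr{C}_2$, and the reverse direction by reading off the support of the uniform superposition (noting all coefficients equal $1/\sqrt{|\mathscr{C}_2|} \neq 0$), are exactly the standard argument. The paper itself states this lemma with a citation to McEliece and supplies no proof of its own, so there is nothing to compare against; your write-up fills that gap in the expected way.
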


According to the following theorem, CSS codes are robust, meaning they can correct any type of qubit error as long as the errors are below a certain threshold. And perhaps the most compelling reason to use CSS code rather than any other quantum error correction code is that CSS code allows us to correct phase and bit errors in any quantum state independently. This is essential because we are not concerned with phase errors in the prepare and measure version of QKD. Additionally, if phase and bit errors can be corrected, any general type of qubit error can be fixed as well.

\begin{theorem}[CSS Codes are Robust]
\cite{mceliece} Let $\mathscr{Q}$ be a t-error correcting $[[n, k]]$ CSS code that corrects up to $t$ $X$ errors and up to $t$ $Z$ errors. Then $\mathscr{Q}$ corrects any errors on at most $t$ codeword qubits.
\end{theorem}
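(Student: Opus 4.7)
The plan is to reduce a general error on at most $t$ qubits to a discrete Pauli error and then invoke the code's separate ability to correct $X$-type and $Z$-type Pauli errors. First, I would expand the action of the noise channel on an encoded state as a sum $\sum_\alpha E_\alpha |\psi\rangle \otimes |e_\alpha\rangle$ after bringing in an environment register, where each $E_\alpha$ is a Pauli operator supported on the (at most) $t$ affected qubits; this is possible because $\{I, X, Y, Z\}^{\otimes n}$ spans the operators on $n$ qubits and one can unitarily dilate the channel.

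Next, using $Y = iXZ$, I would factor each $E_\alpha$ as $i^{\omega_\alpha} E_\alpha^{(X)} E_\alpha^{(Z)}$, where $E_\alpha^{(X)}$ is a tensor product of $X$ and $I$ of weight at most $t$ and $E_\alpha^{(Z)}$ is a tensor product of $Z$ and $I$ of weight at most $t$. I would then measure the two CSS syndromes: the bit-flip syndrome $s_X$ given by the parity-check matrix of $\mathscr{C}_1$, and the phase-flip syndrome $s_Z$ given by the parity-check matrix of $\mathscr{C}_2^\perp$. These two measurements commute with each other, because $\mathscr{C}_2 \subset \mathscr{C}_1$ guarantees that their stabilizer generators commute, and they also commute with the encoded logical operators, so they do not disturb $|\psi\rangle$.

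By the uniqueness-of-syndrome theorem quoted earlier, the measured pair $(s_X, s_Z)$ pins down a unique correctable Pauli error $E_\beta$ of weight at most $t$; the measurement therefore collapses the superposition $\sum_\alpha E_\alpha |\psi\rangle \otimes |e_\alpha\rangle$ onto a single branch $E_\beta |\psi\rangle \otimes |e_\beta\rangle$, up to an irrelevant global phase. Applying the Pauli recovery $(E_\beta)^\dagger$ then restores the encoded state exactly, because by hypothesis the code corrects the $X$-component of $E_\beta$ (via $\mathscr{C}_1$) and the $Z$-component of $E_\beta$ (via $\mathscr{C}_2^\perp$) independently.

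The main obstacle I anticipate is verifying rigorously that the joint syndrome measurement both (a) commutes with every logical operator so as to leave $|\psi\rangle$ untouched, and (b) truly projects the arbitrary noisy state onto a single correctable Pauli branch rather than onto a mixture whose components lie in different correctable cosets. Once this discretization step is justified, which is essentially the content of the stabilizer formalism applied to the CSS construction, the theorem reduces immediately to the two hypotheses on $X$- and $Z$-error correction already granted in the statement.
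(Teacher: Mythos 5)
The paper does not actually prove this theorem---it is quoted with a citation only---so there is no in-paper argument to compare against. Your proposal is the standard ``digitization of errors'' proof and it is essentially correct: dilate the channel, expand the Kraus operators in the Pauli basis so that every term is a Pauli of weight at most $t$, split each Pauli into its $X$-part and $Z$-part via $Y=iXZ$, and let the two commuting syndrome measurements (parity checks of $\mathscr{C}_1$ for bit flips, of $\mathscr{C}_2^{\perp}$ for phase flips, commuting precisely because $\mathscr{C}_2\subset\mathscr{C}_1$) collapse the continuum of errors onto a single correctable Pauli branch. The one point you flag as a possible obstacle---that the measurement might leave a mixture over several branches---is resolved exactly by the uniqueness-of-syndrome theorem the paper states earlier for classical $t$-error-correcting codes: since both $e_X$ and $e_Z$ have weight at most $t$, the pair $(s_X,s_Z)$ determines them uniquely, so at most one Pauli (up to phase, which is absorbed into the environment) survives the projection, and the system decouples from the environment as a product state $E_\beta\lvert\psi\rangle\otimes\lvert e_\beta\rangle$. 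Applying $E_\beta^{\dagger}$ then recovers $\lvert\psi\rangle$ exactly. Minor wording quibble: what leaves each branch undisturbed is not that the stabilizers commute with the logical operators, but that each branch $E_\alpha\lvert\psi\rangle$ is already an eigenstate of every stabilizer generator (with sign fixed by whether $E_\alpha$ commutes or anticommutes with it), so the measurement only filters branches rather than perturbing them. Also note the paper's two CSS definitions use opposite inclusion conventions ($\mathscr{C}_2\subset\mathscr{C}_1$ versus $\mathscr{C}_1\subset\mathscr{C}_2$); you correctly follow the convention of the definition this theorem refers to.
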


The definition of CSS code has already been provided. We will, however, require parametrized code. Thus, we will define parameterized CSS code with the parameters $x$ and $z$ in this section.

\begin{definition}[Parameterised CSS code]
Let $\mathbb{F}_{2}^{n}$ be a binary vector space of $n$ bits. Let $\mathscr{C}_{1}$ and $\mathscr{C}_{2}$ be two linear codes, each of which corrects upto $t$ bit errors, contained in $\mathbb{F}_{2}^{n}$ with
$$
\{0\} \subset \mathscr{C}_{1} \subset \mathscr{C}_{2} \subset \mathbb{F}_{2}^{n}
$$
Let $x, z$ be $n$ bit binary strings. Then, $\mathscr{Q}_{x, z}$ is said to be a parameterised CSS code, where for each vector in $v \in \mathscr{C}_{1}$, the respective codeword is,
$$
v \longrightarrow \frac{1}{\sqrt{\left|\mathscr{C}_{2}\right|}} \sum_{w \in \mathscr{C}_{2}}(-1)^{w \cdot z}|v+w+x\rangle
$$
\end{definition}

\section{Entanglement Purification}
In this section, we will describe the Entanglement Purification Protocol. For a mixed state that is distributed among multiple parties, this protocol is used to increase the purity using local operation and classical communication. For instance, let $n$-qubit pure entangled state $|\psi\rangle$ that is distributed between $N$ parties using a noisy quantum channel. Each of the qubits can acquire an error due to the 3 Pauli operators $\sigma_{x}, \sigma_{y}$, $\sigma_{z}$, or any combination of them \cite{knill}. Let the final state of $|\psi\rangle$ due to errors is a mixed state $\hat{\rho}$. Suppose if $m$ number of copies of $|\psi\rangle$ is sent through the noisy channel, then all parties can distill a small number of $|\psi\rangle$ with high purity. They only require to perform local operations and classical channel for this.

For the purpose of entanglement purification, we will use Calderbank, Shor, and Steane (CSS) codes \cite{CSS, CSS2}. CSS states constitute both the Bell states and GHZ states \cite{cssstate}. CSS state classes can be shown to be equivalent to the state of two colorable graphs \cite{twocolor}. Further, the purification process for CSS states are known already \cite{cssstate}.

To determine the errors uniquely, we use the parity check matrix $\mathscr{H}$ to measure the parity, provided the error is less than a predefined threshold $t$. All of these concepts were covered in the previous section. Measurements are assumed to be protective, so the final state is the projection in one of the eigenspaces of the parity check matrix. For each measurement of parity, we get the eigenvalues $\pm 1$. From this vector of eigenvalues, we obtain the syndrome $s$.

The syndrome uniquely identifies the error, allowing it to be corrected. The main idea of this protocol is that each party measure their respective parity of CSS code on their qubits. Then, by comparing their measurements to those of others, they can determine where the errors in their state are located. The errors are then corrected, and the qubits are decoded to return to the initial entangled state.

To initiate a purification protocol, the parties apply randomly selected stabiliser elements to the noisy states they wish to purify. Each side simply applies an agreed-upon single-qubit Pauli matrix to their respective state's qubit. These Pauli matrices comprise a random stabiliser element, which the parties agree upon prior to initiating the protocol via classical communication. They can therefore regard any noisy state as a probabilistic mixture of states with the same stabiliser but different eigenvalues for the generators. This was demonstrated for Bell states and was later extended to multiparty CSS stabiliser states. \cite{dur, hostens1, hostens2}.

The allowable quantity of noise per particle in GHZ states, on the other hand, decreases as the number of particles increases. In other words, as the number of particles increases, GHZ states become increasingly difficult to purify.

\subsection{GHZ State}
Greenberger, Horne, and Zeilinger first introduce the GHZ state, also known as the cat state \cite{GHZ}. The $n$ qubit state of the GHZ state is written as
$$
\left|{GHZ}^{n}\right\rangle=\frac{1}{\sqrt{2}} \overbrace{|00 \ldots 0\rangle}^{n}+\frac{1}{\sqrt{2}} \overbrace{|11 \ldots 1\rangle}^{n}=\frac{1}{\sqrt{2}}\left|0^{n}\right\rangle+\frac{1}{\sqrt{2}}\left|1^{n}\right\rangle .
$$
The GHZ state is widely used in a variety of applications, including communication complexity \cite{complexity}, nonlocality \cite{nonlocality}, and multi-party cryptography \cite{lee}, to name a few.

The three-qubit GHZ state is represented as
$$
\left|\phi^{+++}\right\rangle=\frac{|000\rangle+|111\rangle}{\sqrt{2}} \text {. }
$$
Along with the $W$ state, the GHZ state represents non-biseparable classes of 3-qubit states \cite{acin}. They cannot be transformed, even probabilistically, into one another with only local operations.

Now consider the generalization of the GHZ state. For a $M>2$ subsystem, we can create a general GHZ entangled state. If $d$ represents the dimension of each subsystem, then the overall Hilbert space of the total system will be $\mathcal{H}_{\text {tot }}=\left(\mathbb{C}^{d}\right)^{\otimes M}$. This GHZ entangled state is called $M$-partite qudit GHZ state and is denoted as
$$
|\mathrm{GHZ}\rangle=\frac{1}{\sqrt{d}} \sum_{i=0}^{d-1}|i\rangle \otimes \cdots \otimes|i\rangle=\frac{1}{\sqrt{d}}(|0\rangle \otimes \cdots \otimes|0\rangle+\cdots+|d-1\rangle \otimes \cdots \otimes|d-1\rangle)
$$
It reduces to our original state when each subsystem is of dimension 2, i.e., $M$-qubits
$$
|\mathrm{GHZ}\rangle=\frac{|0\rangle^{\otimes M}+|1\rangle^{\otimes M}}{\sqrt{2}} .
$$
In simple terms, GHZ is a maximally entangled quantum state that is in a superposition of all subsystems in state $a$ with all subsystems in $a^{\perp}$, where $a$ and $a^{\perp}$ are completely distinguishable states.

Even though the GHZ state is maximally entangled, its most intriguing property is that tracing out any single party completely destroys the entanglement and results in a fully mixed separable state.
$$
\operatorname{Tr}_{k}\left(|G H Z\rangle_{n}(G H Z \mid)=I_{n} \backslash k\right.
$$

Because entanglement monotones are maximized by the GHZ state, it is considered a maximally entangled state for the multipartite case. Additionally, it belongs to the class of states whose normal form is the same as the state itself \cite{normalform}. For the three-qubit case, there is only one such state which is nothing but the GHZ state. As a result, all three-qubit states with a non-zero normal form exhibit some sort of GHZ entanglement.

\subsection{GHZ Basis}
We can also represent any amount of entanglement with the GHZ state, similar to the bipartite case \cite{cunha}. We can begin by writing 
$$
\left|\phi^{+++}\right\rangle=\cos \theta|000\rangle+\sin \theta|111\rangle
$$
for $\theta \in (0, \pi / 2)$. With just performing operations locally on $\left|\phi^{+++}\right\rangle$ we can create a complete GHZ basis. The general structure of the states is given by
$$
\left|\phi^{\mu \lambda \omega}\right\rangle=\sum_{j}(-1)^{\mu j} b_{\mu \oplus j}|j, j \oplus \lambda, j \oplus \omega\rangle,
$$
for $b_{0}=\cos \theta$ and $b_{1}=\sin \theta$. Individually for 3 qubit state
$$
\begin{array}{ll}
\left|\phi^{+++}\right\rangle=\cos \theta|000\rangle+\sin \theta|111\rangle, &
\left|\phi^{+++}\right\rangle=\sin \theta|000\rangle-\cos \theta|111\rangle, \\
\left|\phi^{++-}\right\rangle=\cos \theta|001\rangle+\sin \theta|110\rangle, &
\left|\phi^{-+-}\right\rangle=\sin \theta|001\rangle-\cos \theta|110\rangle, \\
\left|\phi^{+-+}\right\rangle=\cos \theta|010\rangle+\sin \theta|101\rangle, &
\left|\phi^{--+}\right\rangle=\sin \theta|010\rangle-\cos \theta|101\rangle, \\
\left|\phi^{+--}\right\rangle=\cos \theta|011\rangle+\sin \theta|100\rangle, &
\left|\phi^{---}\right\rangle=\sin \theta|011\rangle-\cos \theta|100\rangle
\end{array}
$$

${}$

\begin{table}
\centering
\begin{tabular}{c|c|ccc}
\hline & Computational basis & $X_{1} X_{2} X_{3}$ & $Z_{1} Z_{2}$ & $Z_{2} Z_{3}$ \\
\hline$\left|\phi^{+++}\right\rangle$ & $(|000\rangle+|111\rangle) / \sqrt{2}$ & $+1$ & $+1$ & $+1$ \\
$\left|\phi^{++-}\right\rangle$ & $(|001\rangle+|110\rangle) / \sqrt{2}$ & $+1$ & $+1$ & $-1$ \\
$\left|\phi^{+-+}\right\rangle$ & $(|011\rangle+|100\rangle) / \sqrt{2}$ & $+1$ & $-1$ & $+1$ \\
$\left|\phi^{+--}\right\rangle$ & $(|010\rangle+|101\rangle) / \sqrt{2}$ & $+1$ & $-1$ & $-1$ \\
$\left|\phi^{-++}\right\rangle$ & $(|000\rangle-|111\rangle) / \sqrt{2}$ & $-1$ & $+1$ & $+1$ \\
$\left|\phi^{-+-}\right\rangle$ & $(|001\rangle-|110\rangle) / \sqrt{2}$ & $-1$ & $+1$ & $-1$ \\
$\left|\phi^{--+}\right\rangle$ & $(|011\rangle-|100\rangle) / \sqrt{2}$ & $-1$ & $-1$ & $+1$ \\
$\left|\phi^{---}\right\rangle$ & $(|010\rangle-|101\rangle) / \sqrt{2}$ & $-1$ & $-1$ & $-1$ \\
\hline
\end{tabular}
\caption{The three qubit GHZ basis states $\left|\phi^{s_{1} s_{2} s_{3}}\right\rangle$ and its representation in shown in the table, along with the stablizers of the corresponding states.Here the relations between the generator $s_{1} X_{1} X_{2} X_{3}, s_{2} Z_{1} Z_{2}$, $s_{3} Z_{2} Z_{3}$. and the basis is denoted by the the sign $s_{1}, s_{2}$, and $s_{3}$.}
\label{table:1}
\end{table}

\section{Entangled-based version}
Our entanglement-based version of multiparty QKD employs a correlation similar to that of the Ekert protocol, i.e., a perfect correlation is obtained when a maximally entangled state is measured in the same direction \cite{E91}. For simplicity of notation we will denote $\left|GHZ\right\rangle$ state as $\left|\Phi^{+}\right\rangle$. Suppose that all parties share the maximally entangled GHZ state
$$
\left|\Phi^{+}\right\rangle=\frac{1}{\sqrt{2}}\left(|0 \cdots 0\rangle+|1 \cdots 1\rangle\right)
$$
This state cannot be entangled with any other state because it is a pure state. So an eavesdropper cannot obtain any information about this state. Hence, the aim for all parties is to share $m$ of these maximally entangled state
$$
\left|\Phi^{+}\right\rangle^{\otimes m}=\left|\Phi^{+}\right\rangle \otimes \cdots \otimes\left|\Phi^{+}\right\rangle
$$
and measure these states to get a common secret shared key of which Eve does not have any knowledge. However, due to the insecure nature of the quantum channel, they are using Eve can interact with these states. Further, because the channel is also noisy, the final state shared between them will not be the same state given above but a mixed state $\hat{\rho}$. They need to apply entanglement distillation in order to obtain a small number of maximally entangled states.

The step by step description of the entangled-based version of QKD can be written as follows:

\begin{enumerate}
\item From the state $\left|\Phi^{+}\right\rangle$, Alice creates $2n$ state in $\left|\Phi^{+}\right\rangle^{\otimes 2 n}$. So a total of $2 N n$ qubits in the form of $2 n$ entangled pairs is shared between $N$ parties.
\item $n$ of these states are randomly selected by her for the error estimate in the later stage. For this, she uses a permutation $P$.
\item She selects a binary string randomly $b=\left(b_{1}, b_{2}, \ldots, b_{2 n}\right)$ of length $2 n$, and if $b_j$ is 1, she performs Hadamard operation to the other qubits for the $j^{\text {th }}$ state. That is, if $j^{\text {th }}$ bit of $b$ is 1, then Hadamard operator is applied on other qubits of $j^{\text {th}}$ state. So the final $j^{t h}$ state would be,
$$
\Phi_{j}^{+}=(I \otimes H^{\otimes {N-1}}) \frac{|0\cdots 0\rangle+|1\cdots 1\rangle}{\sqrt{2}}=\frac{|0\rangle|+\rangle \cdots |+\rangle +|1\rangle|-\rangle \cdots |-\rangle }{\sqrt{2}}
$$
\item She sends all other qubit pairs to different parties.
\item After receiving their respective qubits, all parties announce this fact publicly. Finally, all the parties have $2n$ qubits each.

Alice permutes her qubits with $U_{P}$ and measures the last half of her qubits in the computational basis to estimate the error rate. The measurement outcome is stored in the bit string $w^{(A)}$.

\item Alice sends $P, b$, and $w^{(A)}$ to all other parties.

\item All other parties apply $U_{P}$ and corresponding Hadamard according to the string $b$ to their qubits and measure the last half of their qubits (the test qubits) in the computational basis. The measurement outcome is stored in the bit string $w^{(B)}_i$.

\item Everyone shares the value of $w^{(B)}_i$.

All parties calculate the total error $w$ using $w^{(B)}_i$ and $w^{(A)}$ in order to find the total number of errors.

Based on some small mutually chosen confidence factor $c \geq 0$, they calculate $t=\mathrm{wt}(w)+c n-1$.

Everyone agrees on a $t^{\prime}$ quantum error correcting $[[n, k]]$ code $\mathcal{Q}$ for some $t^{\prime} \geq t$ and $k$. If no such code exists, then they abort.

\item Error correction codes $\mathscr{Q} ( \mathscr{C}_{1}, \mathscr{C}_{2} )$ are used to correct all the errors in the remaining qubits, provided total error is below $t$. They finally obtain $\left|\Phi^{+}\right\rangle^{\otimes m}$.

Independently, all of them unencode their remaining qubits with the decoding operation $U_{\text {encode }}^{\dagger}$ for $\mathscr{Q}$.

\item Finally, after measuring in the computational basis of the state $\left|\Phi^{+}\right\rangle^{\otimes m}$, they obtain the shared secret key $k$.

\end{enumerate}

\section{Security}

For the entangled-based version of QKD, we want to show that the knowledge Eve obtains is exponentially small if the fidelity of the shares state with the GHZ state is very high. As we will show that entanglement-based protocol is equivalent to prepare and measure version, it also proves the security of prepare and measure protocol.

The complete security of the entangled-based version will be done in two steps. The first step is to reduce the noise from all the sources to get a noiseless quantum protocol. Quantum repeaters and Fault-Tolerant Quantum Computation (FTQC) are used for this step \cite{repeater, FTQC}. These concepts have been widely used in the context of QKD. We will also prove that the error syndrome shared publicly does not provide any information to Eve. Hence Eve can have complete control over the quantum repeaters and error correction information without compromising security.

Once we have reduced to the noiseless protocol, all parties still need to check if any of the qubits are transformed by the eavesdropper. For this, we will use classical statistics. This is the second step for the security proof, i.e., reducing the noiseless quantum protocol to classical verification protocol. So that we can use the classical statistics to prove the security of entangled-based multiparty QKD. The use of classical statistics for this context helps simplify our calculations significantly. In the following theorem, we demonstrate why the classical arguments work in our quantum context.

\begin{theorem}
The initial state of the quantum system can be reduced to a classical mixture, and classical statistics can be applied.
\end{theorem}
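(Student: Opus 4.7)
The plan is to mimic, in the multipartite setting, the ``Bell-diagonal reduction'' that underlies the Shor--Preskill proof: show that Alice's privately chosen randomness (the basis string $b$ together with the permutation $P$) decoheres the shared noisy state $\hat{\rho}$ into its GHZ-basis--diagonal part, so that only the diagonal probabilities govern every quantity the parties subsequently compute.

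First I would expand $\hat{\rho}$ in the GHZ basis of Table~\ref{table:1},
$$
\hat{\rho}=\sum_{\alpha,\beta} \rho_{\alpha\beta}\,|\phi^{\alpha}\rangle\langle\phi^{\beta}|,
$$
where $\alpha,\beta$ range over the GHZ-basis labels $\mu\lambda\omega\cdots$. Because the stabiliser generators $X^{\otimes N}$ and the $Z_iZ_{i+1}$ diagonalise this basis, measurements in the computational basis (respectively, in the $X$ basis after Hadamards) sample eigenvalues of these commuting generators. The bit-flip and phase-flip error counts that the protocol records on the test block therefore depend only on the joint probability distribution of these eigenvalues, i.e.\ on the diagonal coefficients $\rho_{\alpha\alpha}$.

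Second, I would show that averaging over Alice's random Hadamard pattern $b$ enacts precisely the twirl that projects $\hat{\rho}$ onto its GHZ-diagonal form $\tilde\rho=\sum_{\alpha}p_{\alpha}\,|\phi^{\alpha}\rangle\langle\phi^{\alpha}|$. The key point is that $b$ is chosen and held privately by Alice and revealed only after Eve has committed to her coherent attack, so Eve's channel $\mathcal{E}$ is independent of $b$; hence the expectation $\mathbb{E}_b[(I\otimes H^{b})\,\mathcal{E}(\cdot)\,(I\otimes H^{b})^\dagger]$ can be pulled inside and acts as a conditional expectation that wipes out coherences between distinct GHZ eigenspaces of the stabilisers flipped by the Hadamards. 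After this twirl, $\tilde\rho$ is literally a classical probability mixture over GHZ basis states, so from the statistical standpoint the parties may as well have started from a random variable $\alpha$ with distribution $\{p_\alpha\}$.

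Third, once the state is a classical mixture, the permutation $P$ makes the $n$ test positions a uniform random subset of the $2n$ qubit blocks, and standard sampling-without-replacement bounds (Hoeffding/Serfling) yield the desired concentration: with exponentially small failure probability the fraction of errors on the untested $n$ positions is within the confidence factor $c$ of the fraction measured on the test positions, justifying the threshold $t=\mathrm{wt}(w)+cn-1$ used in step~8 of the protocol.

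The hard part will be making the twirling step rigorous in the presence of Eve's ancilla. One must verify that the random Hadamard pattern together with the GHZ stabiliser structure really does annihilate all off-diagonal GHZ-basis coherences on every tensor factor simultaneously, and that this is true not merely for the reduced state of the $N$ honest parties but for the joint state including Eve's purification, so that Eve's conditional information is no larger against $\tilde\rho$ than against $\hat{\rho}$. Establishing this requires carefully using the independence of $b$ from Eve's side information and the fact that the stabiliser generators listed in Table~\ref{table:1} generate the full GHZ-basis diagonalising algebra; once this is done, everything downstream is classical statistics.
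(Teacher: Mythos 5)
Your overall goal---reducing the shared state to a GHZ-basis-diagonal mixture so that classical statistics apply---matches the paper's, but your mechanism for getting there is different from the paper's and contains a genuine gap at exactly the point you flag as ``the hard part.'' You claim that averaging over Alice's random Hadamard pattern $b$ ``enacts precisely the twirl that projects $\hat{\rho}$ onto its GHZ-diagonal form.'' That is not established, and as stated it is false: per signal, the $b$-average implements only a two-element twirl $\rho \mapsto \tfrac{1}{2}\bigl(\rho + U\rho U^{\dagger}\bigr)$ with $U = I\otimes H^{\otimes N-1}$, which is far from a projection onto the commutant of the full GHZ stabiliser. Such a conjugation interchanges the roles of bit and phase errors but does not annihilate the off-diagonal GHZ-basis coherences; to literally diagonalise the state one needs a twirl over (at least) the full stabiliser group, which is the random-stabiliser-element step described in the entanglement-purification section, not the basis-toggle $b$. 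Since your entire third step (Hoeffding/Serfling sampling on a classical random variable $\alpha$) rests on the state actually being the mixture $\tilde\rho=\sum_{\alpha}p_{\alpha}|\phi^{\alpha}\rangle\langle\phi^{\alpha}|$, the argument does not go through as written.

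The paper avoids this difficulty entirely by a commuting-observables argument in the style of Lo--Chau: every observable $\mathscr{A}_{i}$ that the protocol ever evaluates (parity checks, error counts, the fidelity projector) is diagonal in the GHZ basis $\mathbb{B}$, hence commutes with a complete von~Neumann measurement $\mathscr{M}$ in that basis. One may therefore insert a \emph{hypothetical} measurement $\mathscr{M}$ at the very beginning without changing the joint distribution of any subsequently measured quantity, and the post-$\mathscr{M}$ state is by construction a classical mixture of GHZ eigenstates. No physical twirl is needed, and the issue of whether $\hat{\rho}$ (or its purification held partly by Eve) is ``really'' diagonal never arises---the off-diagonal terms simply never enter any computed probability. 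If you want to salvage your route, you must either replace the $b$-average by a genuine stabiliser twirl and prove it decoheres all off-diagonal terms jointly with Eve's ancilla, or abandon the twirl and adopt the commuting-measurement insertion, which is both shorter and strictly more general.
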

\begin{proof}
In the quantum problem, we can apply the classical statistics because the observables $\mathscr{A}_{i}$'s that we are considering are all diagonal with respect to a basis $\mathbb{B}$, namely the GHZ basis. 

Consider a complete quantum measurement operator in this basis to be represented as $\mathscr{M}$. Now since all the observables $\mathscr{A}_{i}$ that we will use are diagonal in the basis representation of $\mathbb{B}$, they all commute with any other operator, in particular with the measurement $\mathscr{M}$. Hence taking the measurement in the basis $\mathbb{B}$ will not affect the actions of any $\mathscr{A}_{i}$. So we can always assume that such a quantum measurement $\mathscr{M}$ is done prior to the subsequent actions of $\mathscr{A}_{i}$'s.

The initial state of the quantum system, in other words, is just a classical mixture of eigenstates of $\mathscr{M}$. Therefore, we can easily apply an appropriate classical argument in this quantum case.
\end{proof}

Like in the original proof given by Lo-Chau, the observables $O_{i}$'s we considered have degenerate eigenvalues, i.e., coarse-grained observables, not the fine-grained ones \cite{lochau}.

Errors in the GHZ state distribution can not only just come from the noisy channel but can also be because of imperfect sources, errors in the storage of quantum states, and errors during the computation through the elementary gates and measurement. These errors can be easily corrected using quantum error-correcting techniques and Fault Tolerant Quantum Computing. The main idea of FTQC is that the quantum states are encoded in error correction code, and then quantum operations are performed to the encoded state \cite{FTQC}. The use of quantum repeater is not necessary but will aid the error-correcting schemes if the distance is considerable. As for the distance larger than the coherence length, the errors are very high to apply the standard entanglement purification techniques.

We should note that we can extend the standard threshold result to distributed computation on a noisy quantum channel using the above techniques. That is, the GHZ distribution scheme or any distributed quantum algorithm can be easily extended to a noisy quantum scheme.

We have assumed the general case where Eve has control over the quantum channel. Eve can easily learn the information about the error syndrome obtained using the error-correcting codes. As we have already mentioned that this information does not provide any advantage to Eve. We will now explicitly prove this statement in the following theorem.

\begin{theorem}
The error syndrome calculated and shared during the error correction step provides no information about the encoded key.
\end{theorem}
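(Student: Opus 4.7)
The plan is to exploit the orthogonal-subspace structure of the CSS code to show that the error syndrome labels an error block of the Hilbert space, while the encoded key labels a coset \emph{inside} that block, so that the syndrome measurement cannot couple to the coset information.

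First, I would invoke the parameterised CSS code definition: its codewords $\frac{1}{\sqrt{|\mathscr{C}_{2}|}}\sum_{w\in\mathscr{C}_{2}}(-1)^{w\cdot z}|v+w+x\rangle$ make the block decomposition of $(\mathbb{C}^2)^{\otimes n}$ explicit, as the pair $(x,z)$ indexes mutually orthogonal subspaces $\mathcal{H}_{x,z}$ that are precisely the outcomes of the syndrome measurement, while within each $\mathcal{H}_{x,z}$ the coset label of $v$ gives an orthonormal basis (by Lemma \ref{lemma1}) carrying the logical content that is eventually read out as the key. The syndrome-measurement operators are block projectors onto the $\mathcal{H}_{x,z}$ and therefore commute with every logical, coset-preserving operator.

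Second, by the preceding theorem I would reduce the analysis to classical statistics on the GHZ basis: after Eve's coherent attack, combined with the random-Hadamard string $b$ and the random permutation $P$ that Alice announces, the parties' joint state becomes equivalent to a classical mixture of GHZ-basis states, each carrying a definite error label $(x,z)$. The symmetrisation averages out any bias Eve may have introduced between cosets, so the joint distribution factorises as $P(x,z,v)=P(x,z)\,P(v)$ with $P(v)$ uniform on the logical cosets. Since the key $k$ is obtained by measuring the logical GHZ state in the computational basis, the factorisation immediately yields $P(k\mid x,z)=P(k)$, i.e.\ $I(K;S)=0$, which is the claim.

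The hardest step will be rigorously justifying the factorisation $P(x,z,v)=P(x,z)\,P(v)$. This is essentially a twirl argument: one must show that the random Hadamard string $b$ together with the random permutation $P$ generate a sufficiently large symmetry group on the code space to decouple the logical content from the syndrome for arbitrary coherent attacks by Eve. Once this symmetrisation lemma is in place, the orthogonal-subspace decomposition supplied by the parameterised CSS code does the rest of the work automatically.
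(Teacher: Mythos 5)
Your first step---decomposing $(\mathbb{C}^2)^{\otimes n}$ into orthogonal syndrome blocks $\mathcal{H}_{x,z}$ indexed by the parameterised CSS code, with the coset label carrying the logical content inside each block, and observing that the syndrome projectors commute with all logical operators---is sound, and it is in fact sharper than what the paper does. The paper's own proof takes a different and looser route: it writes the state as $\sum_{i,j}a_{ij}|\alpha_i\rangle\otimes|\beta_j\rangle$ with logical qubits $\alpha$ and ancillary (syndrome) qubits $\beta$, and argues that under fault-tolerant operation the logical register evolves deterministically with fidelity $1-O(e^{-r})$ while the ancillas transform unpredictably and decouple; no explicit code-space decomposition or commutation relation is invoked. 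So on the structural half you and the paper are aiming at the same decoupling, but your version makes the mechanism explicit where the paper only asserts it.

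The gap is in your second step. Commutation of the syndrome projectors with the logical operators shows that measuring the syndrome does not \emph{disturb} the encoded information; it does not by itself show that the syndrome \emph{outcome} is statistically independent of the key, which is what the theorem claims. You correctly identify that you need the factorisation $P(x,z,v)=P(x,z)P(v)$, but you attribute it to a twirl generated by the random Hadamard string $b$ and the permutation $P$. That is not the right mechanism: $b$ and $P$ symmetrise over qubit positions and bases, not over the split between syndrome and logical degrees of freedom, and for an arbitrary coherent attack Eve can in principle correlate the error block with her ancilla in a way no position-permutation removes. The independence actually comes from elsewhere, and more cheaply: the ideal shared state is $\left|\Phi^{+}\right\rangle^{\otimes n}\propto\sum_{u,x,z}|u,x,z\rangle\otimes\cdots\otimes|u,x,z\rangle$ (the CSS-basis expansion of the GHZ state written out in the paper's Reduction 1), for which Alice's syndrome outcome $(x,z)$ is uniform and manifestly independent of the subsequently measured logical coset $u$ (equivalently, in the reduced protocol Alice \emph{chooses} $x$ and $z$ uniformly at random, independently of $k$, before encoding). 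Deviations of the real state from this ideal are controlled not by a twirl but by the verification test, which guarantees high fidelity with the perfect GHZ states conditioned on acceptance. If you replace your symmetrisation lemma with this observation---uniformity of $(x,z)$ for the ideal state plus the fidelity bound from the test---your block-decomposition argument closes.
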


\begin{proof}The state shared between all the parties can be written mathematically as a tensor product of logical qubits and the ancillary qubits. The logical qubits keep the encoded information, and the ancillary qubits are there to store the error syndrome. In other words the state $|\Phi \rangle$ can be represented as $\sum_{i, j} a_{i j}\left|\alpha_{i}\right\rangle \otimes\left|\beta_{j}\right\rangle$, where the logical qubits are represented by $\alpha$'s and the ancillary qubits are represented by $\beta$'s and $a_{i j}$ are complex coefficients. We are making no judgment on the state of the system shared between them; they can very well be entangled with Eve's system. This will, however, not affect the proof of this theorem.

With the action of FTQC, the ancillary qubits transform unpredictably

but with a very high fidelity of about $1-O\left(e^{-r}\right)$ for some selected positive $r$, the logical qubits will follow specific evolution and will not be affected by the error calculation. So as long as the errors are below the specified threshold, all the quantum operations can be assumed to be operated only on the state of the logical qubits. The ancillary qubits, in other words, decouple from the verification step.

The logical qubits will therefore represent the desired state, provided the states are not affected too much due to eavesdropping. Hence it does not matter how much information about the error syndrome Eve gains. In general the error syndrome obtained during the quantum error correction step contains no information about the encoded quantum information and can be publicly shared.

\end{proof}

Once we are done with error correction and entanglement distillation, we will be focusing on a noiseless scheme. In order to verify that the actual GHZ state is shared between the parties, we are going to use the random-hashing technique of BDSW \cite{BDSW}. The idea is an extension of the classical verification scheme. It can be easily understood with the help of a simple two-party game. Suppose Eve is hiding an $N$ bit binary string and Alice's objective is to find if all of Eve's bits are 1 or not. For this, Alice is free to ask a few number of fair questions. For example, she could ask, Are the first and third bits the same or not? Then based on Eve's answer, Alice needs to decide whether to accept the string or not. Eve will win if Alice accepts but not all of Eve's string is 1. And Alice will win if she accepts and all the strings are 1. In all other cases, the game will be considered to be a draw. The best strategy for Alice is not to ask a single-digit question but ask for the parities of random subset of bits. For instance, 'is the $i$ th bit 1' is not a good question, but 'are $i$ th and $j$ th bits same' is a good question.

Consider the following scenario: Eve is hiding a 7 digit binary string $x=1101001$ and Alice wants to know if the fourth and fifth bits are the same or not. In other words, she can ask this question as index string $s=0001100$. The answer will be given through the parity $x \cdot s ~(\bmod 7)$. This test reveals whether these two bits are the same or not. There is an exponentially small probability that Eve can cheat in this verification scheme. For the case of two parties, it was already shown that after $m$ number of such questions, the probability for Eve to cheat is $2^{-m}$ \cite{lochau}. The only criteria is that Eve should not know the parity index question, otherwise she can easily win.

Using the above classical verification scheme we can create a quantum verification scheme. We are going to consider GHZ basis and the convention used in BDSW. In particular consider the GHZ basis vectors represented in BDSW notation as follows (for illustration we have condidered only for 3 qubits here):

\begin{table}
\centering
\begin{tabular}{c|c|ccc}
\hline & Computational basis & BDSW Notation \\
\hline$\left|\phi^{+++}\right\rangle$ & $(|000\rangle+|111\rangle) / \sqrt{2}$ & $111$ \\
$\left|\phi^{++-}\right\rangle$ & $(|001\rangle+|110\rangle) / \sqrt{2}$ & $110$ \\
$\left|\phi^{+-+}\right\rangle$ & $(|011\rangle+|100\rangle) / \sqrt{2}$ & $101$ \\
$\left|\phi^{+--}\right\rangle$ & $(|010\rangle+|101\rangle) / \sqrt{2}$ & $100$ \\
$\left|\phi^{-++}\right\rangle$ & $(|000\rangle-|111\rangle) / \sqrt{2}$ & $011$ \\
$\left|\phi^{-+-}\right\rangle$ & $(|001\rangle-|110\rangle) / \sqrt{2}$ & $010$ \\
$\left|\phi^{--+}\right\rangle$ & $(|011\rangle-|100\rangle) / \sqrt{2}$ & $001$ \\
$\left|\phi^{---}\right\rangle$ & $(|010\rangle-|101\rangle) / \sqrt{2}$ & $000$ \\
\hline
\end{tabular}
\caption{The extended BDSW notation for three-qubit GHZ basis.}
\label{table:2}
\end{table}

An $m$ GHZ basis is a complete $m$ ordered basis, each of them is described by a string of length $m$ \cite{bone}. In our protocol say $n$ GHZ states are shared between all parties, then the complete state of the system can be described by string with $\tilde{1}$ 's of length $mn$, $|\tilde{1} \tilde{1} \ldots \tilde{1}\rangle$.  

We now consider the case when an eavesdropper is present. For more generality, we don't only allow Eve to act on the state shared between each member but also to create the state and then send it to Alice. So the state can also be entangled with Eve's system. To represent this general state, we consider a bigger Hilbert space. We know that there is a larger purifying Hilbert space for any mixed state such that it can be represented as a pure state. So without loss of generality, let us assume that Eve creates and provides a pure state represented as:
$$
|\psi \rangle=\sum_{i_{1}, i_{2}, \cdots, i_{N}} \sum_{j} \alpha_{i_{1}, i_{2}, \cdots, i_{N}, j}\left|i_{1}, i_{2}, \cdots, i_{N}\right\rangle \otimes|j\rangle
$$
where $i_{k}$ represents BDSW notation of $k$ th state and takes values $\tilde{0} \tilde{0} \tilde{0}$ to $\tilde{1} \tilde{1} \tilde{1}$ (for 3 parties), $\alpha_{i_{1}, i_{2}, \cdots, i_{N}, j}$'s are the coefficients, and the $|j\rangle$ represents the ancilla's orthonormal basis. Because the state is prepared by Eve herself, each such $|\psi \rangle$ can represent a different cheating strategy of Eve.

We aim to use the same technique as the classical verification scheme using this notation. Using the random-hashing idea of BDSW, we can create an effective quantum verification scheme. Like the classical case, we can find the parity from the outcomes of a particular measurement operator-defined below.

Consider Eve makes a classical mixture of product states of GHZ states. It can be easily shown that the classical verification scheme with the help of BDSW notation can be applied, and the probability of Eve cheating is exponentially small. However, because Eve is free to create any general state. We are interested in knowing if creating an arbitrary general state provides any advantage to Eve. The main step in proving the security of the Entangled-based version is to show that creating a general state in place of just a classical mixture of GHZ states does not enhance Eve's probability of cheating. This was first observed by Lo-Chau \cite{lochau}.

\begin{theorem}
The probability of Eve to cheat successfully for the case when she prepares any arbitrary general state is the same as the case when she prepares a state as a classical mixture of $n$-GHZ basis.
\end{theorem}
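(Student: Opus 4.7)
The plan is to leverage the observation from the preceding theorem that every observable $\mathscr{A}_i$ used by the honest parties in the verification phase is diagonal in the GHZ basis $\mathbb{B}$. Because such an observable commutes with the complete projective measurement $\mathscr{M}$ in $\mathbb{B}$, I may without loss of generality insert $\mathscr{M}$ as the very first action of the protocol, applied to the $nN$ distributed qubits but not to Eve's purifying ancilla. That single move replaces the arbitrary pure state $|\psi\rangle = \sum_{i_1,\ldots,i_N,j} \alpha_{i_1,\ldots,i_N,j} |i_1,\ldots,i_N\rangle \otimes |j\rangle$ supplied by Eve with a classical mixture of GHZ-basis product strings with weights $p(i_1,\ldots,i_N) = \sum_j |\alpha_{i_1,\ldots,i_N,j}|^2$, and this is exactly the regime in which the classical random-hashing bound already applies.

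The steps I would carry out in order are as follows. First, I would list every operation that stands between Eve's preparation and the acceptance/rejection decision: the random stabiliser conjugations shared over the classical channel, the Hadamard mask $b$ that toggles between the two conjugate bases, the hashing parity-check measurements, and the final syndrome-based predicate; each of these is built from tensor products of $X$ and $Z$ acting on GHZ-basis labels, hence is simultaneously diagonalised by $\mathscr{M}$. Second, I would push $\mathscr{M}$ past this entire sequence using the commutation relation, so that the honest parties effectively interact only with the post-measurement diagonal state. Third, I would observe that the ancilla register of Eve plays no role after this dephasing step, because no honest operation ever couples to it, and so tracing it out yields the distribution $p$ above without any loss. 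Fourth, I would invoke the already-established classical bound: against a classical mixture of $n$-GHZ-basis strings, the probability that Eve escapes detection after $m$ random parity questions is at most $2^{-m}$, and this bound transfers verbatim to the general-state case.

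The main obstacle will be justifying rigorously that the insertion of $\mathscr{M}$ at the start of the protocol is truly cost-free for the full transcript, not merely for a single chosen observable. Concretely, I need to verify that every classical message broadcast during the protocol, namely the permutation $P$, the mask $b$, the test outcomes $w^{(A)}$ and $w^{(B)}_i$, and the announced error syndromes, arises as the outcome of some operator diagonal in $\mathbb{B}$, and that the same holds for the acceptance condition. Once diagonality is checked uniformly across the entire transcript, Eve's ancilla factorises out of the honest parties' joint view and the reduction to the classical hashing analysis is immediate; the theorem then follows by applying the classical bound to the induced distribution $p$ on GHZ-labelled strings.
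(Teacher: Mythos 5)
Your proposal follows essentially the same route as the paper: both arguments observe that the parity (hashing) observables and the final acceptance projector are diagonal in the $n$-GHZ basis, hence commute with a complete measurement $\mathscr{R}$ in that basis, so this measurement can be inserted before the protocol without changing any outcome statistics, reducing Eve's arbitrary (possibly entangled) pure state to the classical mixture with weights $p(i_1,\ldots,i_N)=\sum_j|\alpha_{i_1,\ldots,i_N,j}|^2$. Your additional care in checking diagonality uniformly across the whole transcript and in tracing out Eve's ancilla is a slightly more explicit rendering of the same commutation argument the paper uses.
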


\begin{proof}
We define three operators on an arbitrary state $|\Psi \rangle$ of $n$ qubits shared between $N$ parties. All these operators are defined based on their operations on the complete basis formed by $n$-GHZ states.

The first operator $\mathscr{R}$ is defined on the $nN$ bits of string $r$ as follows 
$$
\mathscr{R} \quad: \quad \sum_{r} r|r\rangle\langle r|
$$
where $r$ represents the shared quantum state in the BDSW notation defined previously. The second operator $\mathscr{S}$ for any index string $s$ is defined as
$$
\mathscr{S} \quad: \quad \sum_{r}(s \cdot r)|r\rangle\langle r|
$$
This operator provides the parity for bits in the subset $s$. Finally, we define
$$
\mathscr{P} \quad : \quad |\tilde{1} \tilde{1} \ldots \tilde{1}\rangle\langle\tilde{1} \ldots \tilde{1} \tilde{1}|
$$
represents the projector operator on the maximally entangled $|\Phi^{+} \rangle$ GHZ state. Note that all the observables are defined in a single basis, which is $n$-GHZ basis. Clearly, above operators in this basis are simultaneously diagonalizable. Hence the operator $\mathscr{P}$ and all the $\mathscr{S}$'s commute with the measurement $\mathscr{R}$. Therefore the outcomes of $\mathscr{P}$ and any $\mathscr{S}$ will not be affected by any prior measurement of $\mathscr{R}$. In particular, Eve may have provided any arbitrary state $|\Psi \rangle$, all the parities calculated during the verification process and the hypothetical measurement of the projection $\mathscr{P}$ will not be affected if we measure the state $|\Psi \rangle$ in $n$-GHZ basis, which is $\mathscr{R}$, before giving to Alice in order to perform these operations.

\end{proof}

The above theorem shows that creating any arbitrary state does not give Eve any more advantage than creating a mixture of products of GHZ basis states. Hence using the quantum verification scheme, we can say that Eve's chance of cheating is exponentially small and the entangled-based version is secure. It should be noted, however, that as in the classical case, Eve should not know the parity index string ahead of time. This concludes the security proof of the multiparty Entangled-based QKD. Now we will perform a series of reductions to get our prepare and measure version.

\section{Reduction 1}


Now in order to prove that the original prepare and measure version of QKD is secure, we need to show that the prepare and measure version is equivalent to the entangled-based QKD. We can use the Entangled-based QKD as a starting point to create another QKD by carefully substituting quantum operations for classical operations or simply discarding some unnecessary quantum operations. We systematically modify each step starting with the entangled-based version to reach the original protocol finally. For this reduction, we will use various concepts from classical linear codes, CSS codes and its properties of cosets.

We first remove the requirement to share the maximally entangled state $\left|\Phi^{+}\right\rangle^{\otimes 2 n}$. We can easily remove half of them because in step 7, $n$ of them was measured to obtain the error rate. For this purpose, we do not need to use the maximally entangled state; Alice could simply send single-qubit states in the computational basis. With this change, we have modified the following steps:

$1^{\prime}$. $n$ arbitrary single qubits encoded in $|0\rangle$ or $|1\rangle$ and $n$ maximally entangled states $\left|\Phi^{+}\right\rangle^{\otimes n}$ are prepared by Alice.

$2^{\prime}$. She selects $n$ random positions for the check qubits and in the remaining $n$ positions, she puts half of the state $\left|\Phi^{+}\right\rangle$.

$7^{\prime}$. To estimate the error all the check qubits are measured in $|0\rangle,|1\rangle$ basis. If error is more than the threshold $t$ then the protocol is aborted.

Now to remove the requirement of the rest $n$ maximally entangled pairs, we will use CSS code to encode the key. Further, the following points explain how we can alter the entangled-based QKD without affecting the security.

1. At the end of the Entangled-based Protocol, everyone measures in the computational basis. As long as the number of phase errors are at most $t$, they do not use nor care about the phases or phase errors in the error-correcting codewords. This point is an obvious but very important-statement first noted by Shor and Preskill \cite{shorpresskill}. Alice and Bob shall now ignore phase error detection, correction and unencoding.

2. We share some imperfect state $\left(\Phi^{+}\right)^{\otimes n}$ which is then used to get almost perfect state $\left(\Phi^{+}\right)^{\otimes m}$ where $(m<n)$ using entanglement distillation. Equivalently it is CSS code $\mathscr{Q}$ of $n$ qubits which is protecting $m$ qubits in the noisy channel.

3. Eve's presence is ruled out with entanglement distillation. In this step, each party measures their respective syndrome to find the error. Alice could as well measure her syndrome before or after sharing, provided the errors are below a threshold. In case she chooses to measure her syndrome before sharing; with $x$ and $z$ as her error measurements for bit and phase error. If she detects $x$ and $z$ error with the first qubit of GHZ state, then all other members should also correct the same $x$ and $z$ and in addition, they also need to correct the error due to channel. Basically, for this case, the step is equivalent to sharing GHZ state encoded in parameterized CSS code $\mathscr{Q}_{x, z}$, where $x$ and $z$ are determined by Alice's syndrome measurement and is entirely random.

4. Alice could as well measure the $m$ qubits of GHZ states before sharing the state with others along with the syndrome measurement. The outcome will be a random key $k$ for everyone. Since the errors is below the threshold so CSS code is able to correct the qubits and all parties receive the qubits in the states in which Alice sent them. And since the state received by everyone is in the original state, the no-cloning theorem provides the security. If Alice measures her qubits before sending, others will receive their states with the same outcome.

From points 3 and point 4 above, this is equivalent to Alice sending a random key $k$ of length $m$ encoded in parametrized CSS code $\mathscr{Q}_{x, z}$, for $x$ and $z$ selected randomly by Alice.
$$
k \rightarrow \frac{1}{\sqrt{\left|\mathscr{C}_{2}\right|}} \sum_{w \in \mathscr{C}_{2}}(-1)^{w \cdot z}|k+w+x\rangle
$$

One of the most important features of CSS code is the codewords defined by CSS code $\left|v, x, z\right\rangle$ form an orthonormal basis of a Hilbert space of dimension $2^n$ \cite{CSS}. We can use this orthonormal basis for our GHZ state in place of the computational basis. So instead of sharing the GHZ state
$$
\left|\Phi^{+}\right\rangle^{\otimes n}=\frac{1}{\sqrt{d}} \sum_{i=0}^{d-1}|i\rangle \otimes \cdots \otimes|i\rangle
$$
They share the following equivalent GHZ state
$$
\left|\Phi^{+}\right\rangle^{\otimes n}=\frac{1}{\sqrt{d}} \sum_{u, x, z}\left|u, x, z\right\rangle \otimes \cdots \otimes \left|u, x, z\right\rangle,
$$
where $\left|u, v, w\right\rangle = \frac{1}{\sqrt{\left|\mathscr{C}_{2}\right|}} \sum_{w \in \mathscr{C}_{2}}(-1)^{w \cdot z}|k+w+x\rangle$.

On this state when Alice performs the syndrome measurement using $\mathscr{H}_{1}$ and $\mathscr{H}_{2}^{\perp}$ on her part of the system, she gets a random values for $x$ and $z$. Likewise she gets a random string $u \in$ $\mathscr{C}_{1} / \mathscr{C}_{2}$ when she measures in computational basis in the step 10. Hence all other qubits also collapsed into the codeword for $u$ in $\mathscr{Q}_{x, z}\left(\mathscr{C}_{1}, \mathscr{C}_{2}\right)$, that is $\left|u, x, z\right\rangle$. Which is nothing but the encoded part of the key state $|k\rangle$ formed from $m$ qubits. So from the point mentioned above, instead of sending the maximally entangled state, Alice could send just the $m$ qubit key state $|k\rangle$ encoded in $n$ qubit $\mathscr{Q}_{x, z}\left(\mathscr{C}_{1}, \mathscr{C}_{2}\right)$ code and two random $n$ bit strings $x$ and $z$. At this stage of reduction we reach a protocol we call CSS multiparty protocol.

The main points of the above reduction can be understood intuitively. Provided errors are below a threshold, with CSS code we can share the encoded information with really high fidelity and hence the no-cloning theorem says that very small information can leak out to eavesdropper.

\section{Multiparty CSS Protocol}
Here we present the protocol known as CSS multiparty protocol after integrating all the changes discussed in the preceding section into the entangled-based protocol. Steps of CSS multiparty protocol equivalent to entangled-based is presented below.

\begin{enumerate}
\item Following random binary strings are selected by Alice: an $n$ bit check string in the computational basis, $m$ bit key $k$ string and two $n$ bit strings $x$ and $z$.

\item The key bits $|k\rangle$ is encoded using the CSS code $\mathscr{Q}_{x, z}\left(\mathscr{C}_{1}, \mathscr{C}_{2}\right)$
$$
|k\rangle \rightarrow \frac{1}{\sqrt{\left|\mathscr{C}_{2}\right|}} \sum_{w \in \mathscr{C}_{2}}(-1)^{w \cdot z}|k+w+x\rangle
$$
She then randomly mixes the encoded key bits and the check qubits 

in $2n$ positions.

\item Based on an arbitrary classical string of length $2n$, $b=\left(b_{1}, b_{2}, \ldots, b_{2 n}\right)$, she performs Hadamard operation to all the qubits in $i$th position of the sequence if $b_{i}=1$.

\item Alice sends all the qubits to everyone.

\item All parties confirm the receiving of the qubits.

\item Once everyone received the qubits, Alice broadcasts the location of check qubits and the value of strings $b, x$, and $z$.

\item Other parties perform Hadamard operation whenever the value of $b_{i}$ is 1.

\item All of the check qubits are then measured in $\{|0\rangle, |1\rangle \}$ basis. If the error is more than the threshold $t$, the protocol is aborted.

\item Otherwise everyone performs the error correction and decodes the left $m$ qubits from $\mathscr{Q}_{x, z}$ $\left(\mathscr{C}_{1}, \mathscr{C}_{2}\right)$.

\item They finally perform the measurement to obtain a shared secret key $k$.
\end{enumerate}

From the entangled-based version, which was proved to be secure, we have reduced to reach the CSS protocol. By equivalency, this protocol is also secure. However, we have yet to reach at the prepare and measure version of the protocol. There is still the requirement of large enough quantum computers to encode and decode the key. Further, all other parties require quantum memory to store their qubits until Alice announces the binary strings $b, x$, and $z$. On the other hand, the prepare and measure only need the ability to prepare and measure single qubits. Because the CSS code decouples the bit error from the phase error, it is easy to relax all these requirements.

\section{Reduction 2}

In this section, we will look into the similarities and differences between multiparty CSS and Prepare and Measure version. Then step by step, we will reduce the CSS code protocol to get our final multiparty prepare and measure QKD.

Previously Alice broadcasted the binary strings $x$ and $z$ with others to perform the entanglement purification. However, the key is determined solely by the qubit's value, not by entanglement. In other words, the key is generated through the value of the qubits, and the phase is not needed. So they don't need to correct the phase now. Hence only string $x$ is required in order to correct the bit error and Alice can discard the value of $z$. If we ignore the value of $z$, the effective state shared by Alice is a mixture of $|k+x+w\rangle$ (as described in the theorem below). Because Alice does not share $z$ the effective state shared by her is a mixed state obtained by averaging over the values of $z$. This is clearly proved in the following theorem.

\begin{theorem}
In the multiparty CSS protocol the effective mixed state shared by Alice is
$$
M=\frac{1}{\left|\mathscr{C}_{2}\right|} \sum_{w \in \mathscr{C}_{2}}\left|k^{\prime}+w+x\right\rangle\left\langle k^{\prime}+w+x\right|
$$
with other parties when taking average over the phase syndrome $z$.
\end{theorem}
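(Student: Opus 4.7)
The plan is to compute the mixed state $M$ directly as the uniform average of the pure CSS-encoded states over the phase syndrome $z$, and show that the off-diagonal terms cancel by a standard character-sum argument. Concretely, I would start from the parameterised CSS codeword
$$|\psi_{k,x,z}\rangle = \frac{1}{\sqrt{|\mathscr{C}_{2}|}}\sum_{w\in\mathscr{C}_{2}}(-1)^{w\cdot z}\,|k+w+x\rangle,$$
and, because Alice chooses $z$ uniformly at random and never announces it, argue that the effective state seen by the other parties is
$$M = \frac{1}{2^{n}}\sum_{z\in\{0,1\}^{n}}|\psi_{k,x,z}\rangle\langle\psi_{k,x,z}|.$$

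Next I would expand the outer product and pull out the $z$-dependence. This gives
$$M = \frac{1}{|\mathscr{C}_{2}|}\sum_{w,w'\in\mathscr{C}_{2}}\left(\frac{1}{2^{n}}\sum_{z\in\{0,1\}^{n}}(-1)^{(w+w')\cdot z}\right)|k+w+x\rangle\langle k+w'+x|.$$
The parenthesised inner sum is the standard orthogonality relation for the characters of $\mathbb{F}_{2}^{n}$: it equals $1$ when $w+w'=0$ (equivalently $w=w'$ in $\mathbb{F}_{2}^{n}$) and $0$ otherwise. All off-diagonal coherences are therefore killed by the averaging, and only the diagonal $w=w'$ terms survive. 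This is the conceptual heart of the argument, reflecting the fact that averaging over a uniformly random conjugate phase in a mutually unbiased basis implements a dephasing channel.

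Substituting back yields
$$M = \frac{1}{|\mathscr{C}_{2}|}\sum_{w\in\mathscr{C}_{2}}|k+w+x\rangle\langle k+w+x|,$$
which is precisely the claimed expression (with $k'$ identified with the encoded key $k$). The remaining bookkeeping is to check normalisation ($M$ has unit trace because the $|k+w+x\rangle$ for distinct $w\in\mathscr{C}_{2}$ are mutually orthogonal computational-basis vectors) and to note that it is harmless if, in principle, one wished to restrict $z$ to coset representatives of $\mathscr{C}_{2}^{\perp}$ instead of all of $\mathbb{F}_{2}^{n}$, since only the inner product $w\cdot z$ for $w\in\mathscr{C}_{2}$ enters the calculation and the character-sum orthogonality gives the same result in either case.

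The main potential obstacle is purely notational rather than mathematical: one must be careful to confirm that discarding $z$ really corresponds to a uniform average from the perspective of the other parties (so that the characterisation of $M$ as an incoherent mixture, rather than a state entangled with an unknown $z$-register, is legitimate). Once this is granted, the proof is a one-line Fourier-orthogonality computation and no deeper quantum-mechanical input is required.
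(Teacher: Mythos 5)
Your proposal is correct and follows essentially the same route as the paper's proof: average the CSS codeword projector over $z$, expand the double sum over $w,w'\in\mathscr{C}_{2}$, and kill the off-diagonal terms via the character-sum identity $\sum_{z}(-1)^{(w+w')\cdot z}=2^{n}\delta_{w,w'}$. The only difference is cosmetic — the paper proves that identity by an explicit even/odd counting argument over the support of $w+w'$, whereas you cite it as standard Fourier orthogonality on $\mathbb{F}_{2}^{n}$.
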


\begin{proof}
Consider a binary string $x$ of length $n$. We define $s(x)\subset\{0, 1, \dots, n-1\}$ to be the set denoting the locations of $1$ in the string $x$. The main point of the argument is that when $z$ has an even number of $1$ on the positions in $s(x)$ then the value of the dot product $x \cdot z$ is zero. And when $z$ has odd number of $1$ on the positions in $s(x)$ then the value of the dot product $x \cdot z$ is one. We can represent it as
$$
x \cdot z = |s(x)\cup s(z)|\mod 2.
$$
Equivalently
$$
(-1)^{x \cdot z} = \begin{cases}
+1&\text{for}\quad|s(x)\cup s(z)|\quad\text{is even}\\
-1&\text{for}\quad|s(x)\cup s(z)|\quad\text{is odd}.
\end{cases}
$$
Hence,
$$
\sum_z (-1)^{x\cdot z}=\begin{cases}
2^n&\text{for}\,x=0\\
0&\text{otherwise}
\end{cases}
$$
Here the sum is taken over all $n$ bit string. It is easy to see if we consider two different cases. $s(x) = \emptyset$ for $x=0$ and each of the $2^n$ terms in the sum is $+1$. And for the case when $x \neq 0$ then half of the terms in the sum correspond to $z$ with an odd number of $1$ on positions in $s(x)$ and half of the terms correspond to $z$ with an even number of $1$ on positions in $s(x)$. Hence the terms get canceled and results in zero.

Now coming to our original problem, we need to sum the state over the values of $z$.
$$
\begin{aligned}
\frac{1}{2^{n}\left|C_{2}\right|} \sum_{z}\left[\sum_{w_{1}, w_{2} \in C_{2}} (-1)^{\left(w_{1}+w_{2}\right) \cdot z} \times\left|k^{\prime}+w_{1}+x\right\rangle\left\langle k^{\prime}+w_{2}+x\right|\right]
\end{aligned}
$$
We can break the equation into two different cases $w_1=w_2$ and $w_1\ne w_2$ and then use the previous equation. We have
$$
\frac{1}{2^n|\mathscr{C}_2|}\sum_z\sum_{w_1,w_2\in \mathscr{C}_2}(-1)^{(w_1+w_2)\cdot z}|k'+w_1+x\rangle\langle k'+w_2+x|
$$
$$
= \frac{1}{2^n|\mathscr{C}_2|}\sum_{w_1,w_2\in \mathscr{C}_2}\left(\sum_z(-1)^{(w_1+w_2)\cdot z}\right)|k'+w_1+x\rangle\langle k'+w_2+x|
$$
$$
= \frac{1}{2^n|\mathscr{C}_2|}\sum_{w_1,w_2\in \mathscr{C}_2\\w_1\ne w_2}\left(\sum_z(-1)^{(w_1+w_2)\cdot z}\right)|k'+w_1+x\rangle\langle k'+w_2+x| 
$$
$$
+ \frac{1}{2^n|\mathscr{C}_2|}\sum_{w\in \mathscr{C}_2}\left(\sum_z(-1)^{0\cdot z}\right)|k'+w+x\rangle\langle k'+w+x|
$$
$$
= 0 + \frac{1}{2^n|\mathscr{C}_2|}\sum_{w\in \mathscr{C}_2}2^n|k'+w+x\rangle\langle k'+w+x| 
$$
$$
= \frac{1}{|\mathscr{C}_2|}\sum_{w\in \mathscr{C}_2}|k'+w+x\rangle\langle k'+w+x|
$$
as stated in the theorem.
\end{proof}


Using the expression from the above theorem, we can say that equivalently it is the mixture of the states $|k+w+x\rangle$, with $w$ chosen at random in $\mathscr{C}_{2}$. By picking a codeword, this state can be created entirely utilising single-qubit operations. As a result, the initial few steps of the multiparty CSS protocol can be adjusted as follows:

Alice creates following bits: random $n$ check bits, codeword $u$ selected at random from $\mathscr{C}_{1} / \mathscr{C}_{2}$, another codeword $y \in \mathscr{C}_{2}$, and a random string $w$ of length $n$. She encodes $n$ qubits in the state $|0\rangle$ or $|1\rangle$ according to $u+x+w$, and $n$ qubits in $|0\rangle$ or $|1\rangle$ according to the check bits. Others will receive a state $\left|u+x+w+e_i\right\rangle$ and they directly measure this state in the $|0\rangle,|1\rangle$ basis, instead of first decoding the state and then measuring it as described in steps 9 and 10. Then they obtain a string $u+x+w+e_i$ and now they can perform the error correction classically because $\mathscr{C}_{1}$ and $\mathscr{C}_{2}$ are classical error correcting codes. With the error-correcting information $x$ they have received from Alice, they can subtract $x$ from his string and obtain $u+w+e_i$. If $e_i$ does not contain too many errors, others can then correct $u+w+e_i$ unambiguously to the codeword $u+w$.

This procedure can be summarized as follows:
$$
|\mathbf{k}+\mathbf{w}+\mathbf{x}\rangle \xrightarrow{\text{send}} \quad|\mathbf{k}+\mathbf{w}+\mathbf{x}+\mathbf{e_i}\rangle \xrightarrow{\text{measure}} \mathbf{k}+\mathbf{w}+\mathbf{x}+\mathbf{e_i} \xrightarrow{\text{- x}} \mathbf{k}+\mathbf{w}+\mathbf{e_i} \xrightarrow[\text{correct}]{\text{error}} \mathbf{k}+\mathbf{w} \xrightarrow{\text{decode}} \mathbf{k}
$$

We have successfully modified the protocol such that the key qubits are prepared in a random state, similar to that of check bits. For simplicity and to appear more like our traditional prepare and measure version, we have modified some variables below. Using all the above arguments, certain steps of the protocol can be modified to:

$1^{\prime \prime}$. Random $|u \rangle  \in \mathscr{C}_{1}$ is selected by Alice. Further she creates $n$ random qubits in the computational basis.

$2^{\prime \prime}$. She randomly mixes the check qubits and the encoded qubits $|u\rangle$ in these $2n$ positions.

$6^{\prime \prime}$. Alice announces $b, u+v$, and the positions of the $n$ check bits, where $v$ is the bits after the shifting step.

$9^{\prime \prime}$. Others measure the remaining qubits and shift to obtain $v+e_i$, subtract $u+v$ from this, and correct it with the code $u \in C_{1}$ to obtain $u$.

$10^{\prime \prime}$. All parties compute the coset from which $u$ belongs to get the final key $k$.

Note that the step $10^{\prime \prime}$ is justified because of the lemma \ref{lemma1}. Finally, we can also remove the requirement of Hadamard operation. First, note that in multiparty prepare and measure QKD, when others receive the qubits, they measure them randomly either in computational basis $|0\rangle,|1\rangle$ or in Hadamard basis $|+\rangle,|-\rangle$. They then discard all the qubits that were not measured in the same basis. Alice could simply select a random binary string $b$ and based on this, she can directly prepare her qubits in either computational basis $|0\rangle,|1\rangle$ or in Hadamard basis $|+\rangle,|-\rangle$. All other members can also directly measure these qubits randomly in these two bases. And finally, they only keep those qubits for which the chosen basis matches for everyone. This further removes the need for quantum memory to keep the qubits till they hear the information from Alice. However, since there are $N$ parties and they only keep those qubits for which the chosen basis is the same, they need to begin with $n(1+2^N)$ qubits to get the final key of the same length. Note that this is still significantly less than performing individual QKD between $N$ parties, for which we require at least $4^{N}n$ qubits.

Now Alice needs to wait till the irrelevant qubits are discarded before broadcasting the location of check bits. With these reductions, we finally reach our multiparty prepare and measure QKD, with additional error correction and privacy amplification steps.

\section{Secure multiparty Prepare and Measure Version}

\textbf{Secure prepare and measure QKD}

\begin{enumerate}

\item Alice chooses random $m=2^{N+1}n$-bit strings $k, b$ where $i \in \{2, \cdots , N\}$, creates the state
$$
\begin{aligned}
|\psi \rangle &=\left(H^{b_{1}} \otimes \cdots \otimes H^{b_{m}}\right)|k\rangle \quad  \\
&=H^{(b)}|k\rangle
\end{aligned}
$$
and sends the state $|\psi \rangle$ to all the members.

\item Everyone chooses a random $m$-bit string $b^{i}$, receives a noisy version of $|\psi_i \rangle$, applies $H^{(b^i)}=\left(H^{b_{1}^{i}} \otimes \cdots \otimes H^{b_{2n}^{i}}\right)$ to $|\psi_i \rangle$, and measures the qubits on the computational basis to form the bit string $r^{i}$.

\item All of them publicly disclose $b$ and $b^{i}$. If $b_{j} \neq b_{j}^{i}$ they discard the $j^{\text {th }}$ bit of $k$ and $k^{i}$. They are left with around $2 n$ bits, say $v+e_i$.

\item They publicly decide on a permutation $P$ and perform $P$ on their $2 n$ bits.

\item They publicly disclose the second $n$ bits, to compute $w$ from $w^{(A)}$ and $w^{(i)}$ of their respective permuted $2 n$ bits. They also compute their respective bit flip error syndromes $s_{X}^{(A)}$ and $s_{X}^{(i)}$ on their remaining bits.

\item Based on some small mutually chosen confidence factor $c \geq 0$, everyone calculate $t=2 w t(w)+c n-1$.

\item They then decide on a $t^{\prime}$-error correcting $[n, k, d]$ binary linear code $\mathscr{C}_{1}$ for any $t^{\prime} \geq t$, such that $\mathscr{C}_{1}$ could be used in a CSS code. If no such binary linear code exists, then they abort. She also selects a random $u \in \mathscr{C}_{1}$ such that $k$ is in the coset of $u$.

\item The value of $v+u$ is announced by Alice. Here $u \in \mathscr{C}_{1}$ is a random codeword and $v$ represents the string of remaining non check qubits.

\item All sides subtracts the value of $v+u$ from their result to get $u+e_i$. Then they error correct to remove $e_i$ from their respective qubits to obtain a codewored $u \in \mathscr{C}_{1}$.

\item Alice and Bob use the coset of $u$ as the key.

\end{enumerate}

It was suggested that the parties should agree on a permutation to permute all the qubits prior to the error detection by Shor and Preskill \cite{shorpresskill}. However, this is unnecessary for the secure preparation and measurement of QKD, as two random permutations are equivalent to one random permutation.


As $\mathscr{C}_1$ is just a classical linear error-correcting code, step 9 serves as the error correction step. Privacy amplification is performed in step 10. Note that the value of $x_{k}$ is disclosed publicly, but the error-correcting code $\mathscr{C}_{1}$ chosen by Alice is random. Therefore from just $x_{k}$ it is not possible for Ever to know from which coset the $x_{k}$ belongs.

By proving that the entangled-based multiparty is secure, we have proved that our multiparty prepare and measure QKD is secure, using successive reduction to that scheme. Due to the fact that we have not made any alteration to the quantum system of Eve, we finally conclude that through the equivalency, multiparty prepare and measure is secure.

\section{Conclusion}

Quantum computers have opened up a new dimension not only for QKD but also in other cryptographic domains \cite{nilesh}. Greenberger Horne Zeilinger (GHZ) entanglement, in particular for multiparty quantum communication applications, is a helpful resource in this regard. However, actual applications of these multiparty activities have proven to be experimentally hard because to the low intensity and fragility of the GHZ entanglement source under present technology \cite{fu}. As a result, even with cutting-edge equipment, multiparty entangled-based quantum communication remains an extremely difficult experiment. The experimental distribution of GHZ entanglement has just recently been accomplished, with each pair of GHZ-entangled photons separated by less than 1km \cite{erven}. As a result, despite the fact that the multiparty entangled-based QKD was secure, it was not practical based on our current technology. Therefore the prepare and measure version of multiparty QKD provides a better alternative.

We extended the simple proof of security for the BB84 protocol to an arbitrary number of $N$ participants willing to share a secret key, achieving the so-called multiparty prepare and measure QKD. We have now established the security of the multiparty prepare-and-measure QKD by first establishing the security of its entanglement-based implementation and then gradually lowering it to the prepare-and-measure shceme. We made no changes on the quantum state of Eve, therefore we can conclude that the multiparty prepare-and-measure protocol is secure. There are, of course, some constraints. This proof establishes only the security of the ideal protocol with an ideal device, in which the states transmitted are identical to those specified. This does not, however, provide protection against attacks such as photon number splitting. Additionally, the proof makes no judgments regarding the amount of work required for decoding: for practical implementations in QKD, the code $\mathscr{C}_1$ must be decodable efficiently. Another issue is that, because CSS codes aren't ideal, the proof doesn't establish an upper bound for the amount of eavesdropping that is acceptable.

One should note that even though the entangled-based multiparty QKD is equivalent to the prepare and measure multiparty QKD, it does not implies that they are both equally feasible and practical with the same technology. However, it implies that the proof of security for the entangled-based protocol can be automatically reduced to the proof of security for the prepare and measure protocol. And we have used this technique to provide the simple proof of security for the multiparty prepare and measure QKD in this paper. This makes proving the security of the prepare and measure QKD very convenient because of the monogamy property of maximally entangled states used in the entangled-based version \cite{monogamy}.


\begin{thebibliography}{100}

\bibitem[1]{Advances} Pirandola, Stefano, et al. "Advances in quantum cryptography." Advances in optics and photonics 12.4 (2020): 1012-1236.tems (1982). doi: 10.1145/357172.357176.


\bibitem[2]{experiment} Zhao, Yi, et al. "Experimental quantum key distribution with decoy states." Physical review letters 96.7 (2006): 070502.

\bibitem[3]{sudhir} Singh, Sudhir Kumar, and R. Srikanth. "Unconditionally secure multipartite quantum key distribution." arXiv preprint quant-ph/0306118 (2003).

\bibitem[4]{QSS} Hillery, M., Bužek, V., and Berthiaume, A. (1999). Quantum secret sharing. Physical Review A, 59(3), 1829.

\bibitem[5]{QCC} Zhao, Shuai, et al. "Phase-matching quantum cryptographic conferencing." Physical Review Applied 14.2 (2020): 024010.

\bibitem[6]{TQC} Chen, Yu-Ao, et al. "Experimental quantum secret sharing and third-man quantum cryptography." Physical review letters 95.20 (2005): 200502.

\bibitem[7]{Xiao} Xiao, L., Long, G. L., Deng, F. G., and Pan, J. W. (2004). Efficient multiparty quantum-secret-sharing schemes. Physical Review A, 69(5), 052307.

\bibitem[8]{Epping} Epping, M., Kampermann, H., and Bruß, D. (2017). Multi-partite entanglement can speed up quantum key distribution in networks. New Journal of Physics, 19(9), 093012.

\bibitem[9]{Ribeiro} Ribeiro, J., Murta, G., and Wehner, S. (2018). Fully device-independent conference key agreement. Physical Review A, 97(2), 022307.

\bibitem[10]{GHZ} Greenberger, D. M., Horne, M. A., Shimony, A., and Zeilinger, A. (1990). Bell’s theorem without inequalities. American Journal of Physics, 58(12), 1131-1143.

\bibitem[11]{GHZ2} Pan, J. W., and Zeilinger, A. (1998). Greenberger-horne-zeilinger-state analyzer. Physical Review A, 57(3), 2208.

\bibitem[12]{assumptions2} Matsumoto, R. (2007). Multiparty quantum-key-distribution protocol without use of entanglement. Physical Review A, 76(6), 062316.

\bibitem[13]{assumptions1} Wallden, P., Dunjko, V., Kent, A., and Andersson, E. (2015). Quantum digital signatures with quantum-key-distribution components. Physical Review A, 91(4), 042304.

\bibitem[14]{BB84} H. Bennett, C. H. and Brassard, G. Quantum Cryptography: Public Key Distribution and Coin Tossing. Theoretical Computer Science 560 (1984). doi: 10.1016/j.tcs.2014. 05.025.

\bibitem[15]{shorpresskill} Shor, P. W., and Preskill, J. (2000). Simple proof of security of the BB84 quantum key distribution protocol. Physical review letters, 85(2), 441.

\bibitem[16]{mayers} Dominic Mayers. 2001. Unconditional security in quantum cryptography. J. ACM 48, 3 (May 2001), 351–406. DOI:https://doi.org/10.1145/382780.382781

\bibitem[17]{lochau} Lo, H. K., and Chau, H. F. (1999). Unconditional security of quantum key distribution over arbitrarily long distances. science, 283(5410), 2050-2056.

\bibitem[18]{BDSW} Bennett, C. H., DiVincenzo, D. P., Smolin, J. A., and Wootters, W. K. (1996). Mixed-state entanglement and quantum error correction. Physical Review A, 54(5), 3824.

\bibitem[19]{hamming} Gottesman, D. (1996). Class of quantum error-correcting codes saturating the quantum Hamming bound. Physical Review A, 54(3), 1862.

\bibitem[20]{mceliece} McEliece, R. J. (1977). The theory of information and coding: A mathematical framework for communication. Reading, Mass: Addison-Wesley Pub. Co., Advanced Book Program.

\bibitem[21]{CSS} Calderbank, A. R., and Shor, P. W. (1996). Good quantum error-correcting codes exist. Physical Review A, 54(2), 1098.

\bibitem[22]{CSS2} Steane, A. M. (1996). Simple quantum error-correcting codes. Physical Review A, 54(6), 4741.

\bibitem[23]{noclonning} Buzek, V., and Hillery, M. (1996). Quantum copying: Beyond the no-cloning theorem. Physical Review A, 54(3), 1844.

\bibitem[24]{dos} Li, Yuan, et al. "A denial-of-service attack on fiber-based continuous-variable quantum key distribution." Physics Letters A 382.45 (2018): 3253-3261.

\bibitem[25]{EDP} Rozpedek, Filip, et al. "Optimizing practical entanglement distillation." Physical Review A 97.6 (2018) 062333.

\bibitem[26]{knill} Knill, E., and Laflamme, R. (1997). Theory of quantum error-correcting codes. Physical Review A, 55(2), 900.

\bibitem[27]{cssstate} Chen, Kai, and Hoi-Kwong Lo. "Conference key agreement and quantum sharing of classical secrets with noisy GHZ states." Proceedings. International Symposium on Information Theory, 2005. ISIT 2005.. IEEE, 2005.

\bibitem[28]{twocolor} Aschauer, H., Dür, W., and Briegel, H. J. (2005). Multiparticle entanglement purification for two-colorable graph states. Physical Review A, 71(1), 012319.

\bibitem[29]{dur} Dür, W., Aschauer, H., and Briegel, H. J. (2003). Multiparticle entanglement purification for graph states. Physical review letters, 91(10), 107903.

\bibitem[30]{hostens1} Hostens, E., Dehaene, J., and De Moor, B. (2005). Stabilizer states and Clifford operations for systems of arbitrary dimensions and modular arithmetic. Physical Review A, 71(4), 042315.

\bibitem[31]{hostens2} Hostens, E., Dehaene, J., and De Moor, B. (2006). Hashing protocol for distilling multipartite Calderbank-Shor-Steane states. Physical Review A, 73(4), 042316.

\bibitem[32]{complexity} Brukner, Č., Żukowski, M., Pan, J. W., and Zeilinger, A. (2004). Bell’s inequalities and quantum communication complexity. Physical review letters, 92(12), 127901.

\bibitem[33]{nonlocality} Cereceda, J. L. (2004). Hardy's nonlocality for generalized n-particle GHZ states. Physics Letters A, 327(5-6), 433-437.

\bibitem[34]{lee} Lee, Juhui, et al. "Entanglement swapping secures multiparty quantum communication." Physical Review A 70.3 (2004): 032305.

\bibitem[35]{acin} Acin, Antonio, et al. "Classification of mixed three-qubit states." Physical Review Letters 87.4 (2001): 040401.

\bibitem[36]{normalform} Verstraete, F., Dehaene, J., and De Moor, B. (2003). Normal forms and entanglement measures for multipartite quantum states. Physical Review A, 68(1), 012103.

\bibitem[37]{cunha} M Cunha, M., Fonseca, A., and O Silva, E. (2019). Tripartite entanglement: Foundations and applications. Universe, 5(10), 209.

\bibitem[38]{E91} Ekert, Artur K. "Quantum Cryptography and Bell’s Theorem." Quantum Measurements in Optics. Springer, Boston, MA, 1992. 413-418.

\bibitem[39]{repeater} Jiang, Liang, et al. "Quantum repeater with encoding." Physical Review A 79.3 (2009): 032325.

\bibitem[40]{FTQC} Gottesman, Daniel. "Theory of fault-tolerant quantum computation." Physical Review A 57.1 (1998): 127.

\bibitem[41]{bone} de Bone, Sébastian, et al. "Protocols for creating and distilling multipartite GHZ states with Bell pairs." IEEE Transactions on Quantum Engineering 1 (2020): 1-10.

\bibitem[42]{nilesh} Nilesh, K., and Panigrahi, P. K. (2021). Quantum Blockchain based on Dimensional Lifting Generalized Gram-Schmidt Procedure. arXiv:2110.02763.

\bibitem[43]{fu} Fu, Yao, et al. "Long-distance measurement-device-independent multiparty quantum communication." Physical review letters 114.9 (2015): 090501.

\bibitem[44]{erven} Erven, Chris, et al. "Experimental three-photon quantum nonlocality under strict locality conditions." Nature photonics 8.4 (2014): 292-296.

\bibitem[45]{monogamy} Oliveira, T. R., Cornelio, M. F., and Fanchini, F. F. (2014). Monogamy of entanglement of formation. Physical Review A, 89(3), 034303.



\end{thebibliography}
\end{document}